\newcommand{\argmax}{\operatornamewithlimits{argmax}}
\newcommand{\blue}[1]{{\color{black}{#1}}}
\algnewcommand\algorithmicinput{\textbf{Input:}}
\algnewcommand\Input{\item[\algorithmicinput]}
\newtheorem{theorem}{Theorem}[section]
\newtheorem{lemma}[theorem]{Lemma}
\newtheorem{corollary}[theorem]{Corollary}
\newtheorem{fact}[theorem]{Fact}
\newtheorem{definition}[theorem]{Definition}
\begin{document}

\author{Pan~Lai,~
        Rui~Fan,~
        Xiao~Zhang,~
				Wei~Zhang,~
				Fang~Liu,~
				Joey~Tianyi~Zhou~
\thanks{An earlier version of this work has been presented at IEEE International Parallel and Distributed Processing Symposium (IPDPS), 2016 \cite{PanAssign}.}
\thanks{
Pan Lai is with College of Computer Science, South-Central University for Nationalities, and also with the Engineering Systems and Design Pillar, Singapore University of Technology and Design, Singapore. (e-mail: plai1@ntu.edu.sg). Rui Fan is with School of Information Science and Technology, ShanghaiTech University, Shanghai, China.
(e-mail: fanrui@shanghaitech.edu.cn). Xiao Zhang is with College of Computer Science, South-Central University for Nationalities, Wuhan, China.
(e-mail: xiao.zhang@my.cityu.edu.hk). Wei Zhang is with the Information and Communications Technology Cluster, Singapore Institute of Technology, Singapore. 
(e-mail: wei.zhang@ieee.org). Fang Liu is with School of Science and Technology, Singapore University of Social Sciences, Singapore. (e-mail: fliu@ieee.org). Joey Tianyi Zhou is with Institute of High Performance Computing, A*STAR, Singapore. (e-mail: joey.tianyi.zhou@gmail.com).}
}

\title{Utility Optimal Thread Assignment and Resource Allocation in \blue{Multi-Server Systems}}
\maketitle

\begin{abstract}
Achieving high performance in many \blue{multi-server systems} requires finding a good assignment of worker threads to servers and also effectively allocating each server's resources to its assigned threads. The assignment and allocation components of this problem have been studied extensively but largely separately in the literature. In this paper, we introduce the \emph{assign and allocate (AA)} problem, which seeks to simultaneously find an assignment and allocation that maximizes the total utility of the threads. Assigning and allocating the threads together can result in substantially better overall utility than performing the steps separately, as is traditionally done. We model each thread by a utility function giving its performance as a function of its assigned resources.  We first prove that the AA problem is NP-hard.  We then present a $2 (\sqrt{2}-1) > 0.828$ factor approximation algorithm for concave utility functions, which runs in $O(mn^2 + n (\log mC)^2)$ time for $n$ threads and $m$ servers with $C$ amount of resources each. We also give a faster algorithm with the same approximation ratio and $O(n (\log mC)^2)$ time complexity.  We then extend the problem to two more general settings.  First, we consider threads with nonconcave utility functions, and give a 1/2 factor approximation algorithm.  Next, we give an algorithm for threads using multiple types of resources, and show the algorithm achieves good empirical performance.  We conduct extensive experiments to test the performance of our algorithms on threads with both synthetic and realistic utility functions, and find that \blue{they achieve} over 92\% of the optimal utility on average. We also compare our \blue{algorithms} with a number of practical heuristics, and find that \blue{our algorithms} achieve up to 9 times higher total utility. 

\end{abstract}
{\bf Keywords}: Assignment and allocation; utility; algorithms; \blue{multi-server systems}; web hosting center; cloud 
\section{Introduction}
In this paper, we study efficient ways to execute a set of resource constrained worker threads on multiple servers.  Our problem consists of two steps.  First, each thread is \emph{assigned} to a server. Subsequently, the resources at each server are \emph{allocated} to the threads assigned to it. Each thread obtains a certain utility based on the resources it is allocated, which is captured in the form of a utility function. The goal is to maximize, over all possible assignments and allocations, the total utility of all the threads. We call this problem \emph{AA}, for \emph{assign and allocate}.  

The AA problem can model a range of \blue{system settings}. \blue{One example is a web hosting center, where a service provider operates websites on behalf of customers.  Incoming web requests are serviced by threads which are run on center's servers.    As requests have different characteristics, the performance of the thread for each request depends on the amount of resources, such as processing or memory, which the thread is allocated.  The goal of the service provider is to process the largest number of requests, and it does this by controlling both the servers which threads are assigned to and the resources allocated to the threads on each server, in a way which maximizes the overall system performance.  Another application of the AA problem is in cloud computing.  Clouds are a promising paradigm for providing configurable computing resources to users. A cloud provider sells virtual machine instances (corresponding to threads in AA) running on physical machines (corresponding to servers). Customers use utility functions to express their willingness to pay for instances consuming different amounts of resources, and the provider's task is to assign and size the virtual machines to maximize her profit (utility).  As a final application, consider a multicore processor, where each core corresponds to a server offering its shared cache as a resource to concurrently executing threads. Each thread is first bound to a core, after which cache partitioning \cite{Suh, Qureshi} can enforce an allocation of the core's cache among the assigned threads. A thread's performance is often strongly dependent on its cache allocation \cite{Qureshi, Lai, PanMakespan}. A scheduler tries to maximize overall system performance through an efficient mapping of threads to the cores and effectively partitioning of each core's cache.}

\blue{The two steps in AA correspond to the \emph{thread assignment} and \emph{resource allocation} problems, both of which have been studied extensively in the literature. However, to the best of our knowledge, these problems have not been studied together in the unified context considered in our work. Existing works on resource allocation \cite{Suh, Qureshi, Lai, Fox, Galil} largely deal with dividing the resources on a single server among a given set of threads. It is not clear how to apply these algorithms when there are multiple servers, since there are many possible ways to initially assign the threads to servers, and certain assignments result in low overall performance regardless of how resources are subsequently allocated. For example, if there are two types of threads, one with high maximum utility and one with low utility, then assigning all the high utility threads to the same server will result in competition between them and depressed overall utility no matter how resources are allocated.  Likewise, existing works on thread assignment \cite{Urgaokar2, Karve, Becchi, Radojkovic} often overlook the resource allocation aspect. Typically in these works each thread requests a fixed amount of resource. Once assigned to a server, a thread is allocated precisely the resources it requested, without any adjustments made based on the requests of other threads assigned to the same server. This can also lead to suboptimal performance. For example, consider a thread which obtains $x^\beta$ utility when allocated $x$ amount of resource, for some $\beta \in (0,1)$\footnote{For $\beta = \frac{1}{2}$, this is known as the ``square root rule'' \cite{Srinivasan, Thiebaut}.}, and suppose the thread requests $z > 0$ resources. Then when there are $n$ threads and one server with $C$ resources, typical thread assignment algorithms would give $\frac{C}{z}$ threads $z$ resources each while the rest receive 0, leading to a total utility of $Cz^{\beta-1}$; note that this quantity is constant in $n$. However, the optimal allocation gives $\frac{C}{n}$ resources to each thread and has total utility $C^\beta n^{1-\beta}$, which is arbitrarily better than the first allocation for large $n$. To see this, consider an example of $C=100, z=10, \beta=0.5$, then the ratio between $C^\beta n^{1-\beta}$ and $Cz^{\beta-1}$ is $\sqrt{\frac{n}{10}}$, which is arbitrarily large for large $n$.}

\blue{The AA problem models each thread using a nondecreasing \emph{utility function} giving its performance as a function of the resources it receives. In practice, many utility functions are concave, capturing a frequently observed diminishing returns property \cite{Qureshi}. However, in certain settings, such as cache hit rates for different amounts of cache allocation, the utility function may also be nonconcave \cite{Lai}.  Furthermore, a thread's utility may sometimes depend on multiple types of resources.  The goal in AA is to simultaneously find assignments and allocations for all the threads which maximizes their total utility.  To the best of our knowledge, we are the first to study the two problems in a unified context. Our paper makes the following contributions.}

\begin{enumerate}
	\item We show that the AA problem is NP-hard, even when there are only two servers. In contrast, the problem is efficiently solvable when there is a single server \cite{Galil}.
	\item We present an approximation algorithm which achieves at least $\alpha = 2(\sqrt{2} - 1) > 0.828$ times the optimal utility for concave utility functions.  The algorithm relates the optimal solution of a single server problem to an approximately optimal solution of the multiple server problem.  It runs in  $O(mn^2 + n (\log mC)^2)$ time, where $n$ and $m$ are the number of threads and servers, respectively, and $C$ is the amount of resource on each server.  We also present a faster algorithm with $O(n (\log mC)^2)$ running time and the same approximation ratio.
	\item We also consider threads with nonconcave utility functions, and present an algorithm with approximation ratio $\frac{1}{2}$ and running time $O(s nmC \alpha(mC) (\log mC)^2)$, where $s$ is the maximum number of concave or convex segments in each utility function, and $\alpha$ is the inverse Ackermann function.
	\item \blue{While the previous three algorithms consider utility functions based on a single resource type, we also present an algorithm for utility functions based on multiple types of resources.  We show that this algorithm achieves good empirical performance.}  
	\item We conduct extensive experiments to test the performance of our algorithms.  We use several types of synthetic and realistic utility functions, and show that our algorithms obtains over 92\% of the maximum utility on average.  We also compare our algorithms with several simple but practical heuristics, and show that they achieve up to 9 times higher utility for very heterogeneous threads.   
\end{enumerate}     

The rest of paper is organized as follows. In Section II, we describe related works on thread assignment and resource allocation.  Sections III formally defines our model and the AA problem. Section IV proves AA is NP-hard. Section V presents an approximation algorithm for concave utility functions and its analysis, and Section VI proposes a faster algorithm. Section VII presents and analyzes an algorithm for nonconcave utility functions. In Section VIII, we describe our experimental results. Section IX extends our proposed algorithms to multiple resource types. Finally, Section X concludes and discusses some future problems.
\section{Related Works}
There is a large body of work on resource allocation for a single server.  Fox \emph{et al.} \cite{Fox} considered concave utility functions and proposed a greedy algorithm to find an optimal allocation in $O(nC)$ time, where $n$ is the number of threads and $C$ is the amount of resource on the server. Galil \cite{Galil} proposed an improved algorithm with $O(n (\log C)^2)$ running time, by doing a binary search to find an allocation in which the derivatives of all the threads' utility functions are equal, and the total resources used by the allocation is $C$. Resource allocation for nonconcave utility functions is weakly NP-complete. However, Lai and Fan \cite{Lai} identified a structural property of real-world utility functions which leads to fast parameterized algorithms.  


\blue{Our work is related to the \emph{application placement} problem, in which applications with different resource requirements need to be mapped to servers while fulfilling certain quality of service guarantees. Urgaokar \emph{et al.} \cite{Urgaokar2} proposed offline and online approximation algorithms for application placement, and the offline algorithm achieves a $\frac{1}{2}$ approximation ratio. \cite{Karve} proposed algorithms to place web applications on servers with the goal of maximizing the amount of demand which can be satisfied. They model each thread using a single value corresponding to a fixed amount of allocated resource, instead of a utility function allowing a range of resource allocations as our paper does.}


The AA problem is also related to the multiple knapsack and multiple-choice knapsack (MCKP) problems, for which there has been a number of studies.  For the former, Neebe \emph{et al.} \cite{Neebe} proposed a branch-and-bound algorithm, and Chekuri \emph{et al.} \cite{Chekuri} proposed a polynomial time approximation scheme.  The multiple knapsack problem differs from AA in that each item, corresponding to a thread, has a single weight and value, corresponding to a single resource allocation and associated utility.  In contrast, we use utility functions which allow threads a continuous range of allocations and utilities. The MCKP problem can model utility functions as it considers classes of items with different weights and values and chooses one item from each class; each class corresponds to a utility function.  However, MCKP only considers a single knapsack, and thus corresponds to a restricted form of AA with one server.  Kellerer \emph{et al.} \cite{Kellerer} proposed a greedy MCKP algorithm.  Lawler \cite{Lawler} proposed a $1- \epsilon$ approximate algorithm, while Gens and Levner \cite{Gens} proposed a $\frac{4}{5}$ approximate algorithm with better running time.  AA can be seen as a combined multiple-choice multiple-knapsack problem.  We are not aware of any previous work on this problem.  The model used in this paper corresponds to the case where there are items for every weight.

\blue{There has also been a large amount of work on resource provisioning for cloud computing and data centers, which are related to the multi-server setting we consider. \cite{Li, Ren} analyzed online bin packing algorithms for the problem of dispatching cloud gaming requests to servers in order to minimize total cost. Bobroff \emph{et al.} \cite{Bobroff} proposed a first-fit heuristic which dynamically places virtual machines (VMs) on physical machines to minimize the number of machines required to support a workload at a specified allowable rate of SLA violations. Jennings \emph{et al.} \cite{Jennings} surveyed many virtual machine placement schemes in clouds. These works typically follow a bin packing formulation, whereas the AA problem corresponds to a multiple-choice multiple knapsack problem. Lampe \emph{et al.} \cite{Lampe} proposed an auction scheme to allocate VMs to users to maximize a provider's profit. \cite{Shi, Mashatekhy} proposed auction schemes to allocate VMs to users to maximize social welfare. \cite{Shi} focuses on the online problem while \cite{Mashatekhy} focuses on the offline problem. In auction schemes, each user submits a bid stating their desired number of VM instances and their maximum willingness to pay, and the system provider decides whether to accept the users' bids. Each user's request for VMs is allowed to be assigned to more than one physical machine. However, in the AA problem, each thread can be assigned to only one server. Tang \emph{et al.} \cite{Tang} proposed a policy to allocate multiple resources to different users to achieve long-term fairness, but do not consider how to assign the resource requests to different servers as our paper does. Han \emph{et al.} \cite{HanCloud} proposed an algorithm to schedule (offload) jobs on mobile devices to edge servers to minimize the jobs' response times in edge-clouds. Wei \emph{et al.} \cite{Wei} proposed an online algorithm to minimize cost for data centers with multiple servers and randomly arriving service requests by determining the state of each server among three types (i.e., active, idle and setup). Han \cite{HanDataCenter} proposed an approximate dynamic VM management method to minimize power consumption in data centers.  While all these works have as a basic goal---an efficient usage of servers and resources, they usually consider these aspects separately, and also do not model thread performance using utility functions.}
  
\section{Model}
\label{sec-model}
In this section we formally define our model and problem.  We consider a set of $m$ homogenous servers $s_1, \ldots, s_m$. Each server has $C > 0$ amount of resources, where $C$ is a positive integer. We note that homogeneous servers, \emph{i.e.} servers with the same processing capabilities and available resources, have been widely studied in the literature \cite{Urgaokar2,Lin2011} and accurately model a number of scenarios. For example, multicore processors typically contain shared caches of the same size, and datacenters often have many identically configured servers for ease of management. We also have $n$ threads $t_1, \ldots, t_n$. The set of threads is static, to capture threads performing long-running tasks.  Let $S$ and $T$ denote the set of servers and threads, respectively.  Each thread $t_i$ is associated with a \emph{utility function} $f_i: [0,C] \rightarrow \mathbb{Z}^{\geq 0}$, giving its performance as a function of the resources it is allocated.  We assume that $f_i$ is nonnegative and nondecreasing.  We also assume that $f_i$ is either concave, or consists of a set of concave or convex segments.  In the former case, the concavity assumption models a diminishing returns property frequently observed in practice \cite{Qureshi}, and is often used to model cache and memory performance \cite{Srinivasan, Thiebaut}.  The latter case of a nonconcave function consisting of several concave or convex segments was introduced in \cite{Lai}, and can be used to model an arbitrary function $f: [0, C] \rightarrow \mathbb{Z}^{\geq 0}$.  It was observed in \cite{Lai} that most utility functions $f$, despite being possibly nonconcave, consist of a small number  $s$ of \emph{segments} $[0,b_1], (b_1, b_2], \ldots, (b_{s-1}, C]$, such that $f$ is either concave or convex in each segment $(b_i, b_{i+1}]$.  Examples of such functions include cache hit rate functions from the SPEC CPU benchmarks, such as \emph{aspi} or \emph{swim}.  

Our goal is to assign the threads to the servers in a way which respects the resource bounds and maximizes the total utility. While a solution to this problem involves both an assignment of threads and allocations of resources, for simplicity we use the term assignment to refer to both.  Thus, an \emph{assignment} is given by a vector $[(r_1, c_1), \ldots, (r_n, c_n)]$, indicating that each thread $t_i$ is allocated $c_i$ amount of resource on server $s_{r_i}$.  Let $S_j$ be the set of threads assigned to server $s_j$.  That is, $S_j = \{i \, | \, r_i = j\}$.  Then for all $1 \leq j \leq m$, we require $\sum_{i \in S_j} c_i \leq C$, so that the threads assigned to $s_j$ use at most $C$ resources.  We assume that every thread is assigned to some server, even if it receives 0 resources on the server.  The \emph{total utility} from an assignment is $\sum_{j=1}^m \sum_{i \in S_j} f_i(c_i) = \sum_{i \in T} f_i(c_i)$.  The \emph{AA (assign and allocate)} problem is to find an assignment that maximizes the total utility.

\section{Hardness of the Problem}
In this section, we show that it is NP-hard to find an assignment maximizing the total utility, even when there are only two servers and the utility functions are all concave.  Thus, it is unlikely there exists an efficient optimal algorithm for the AA problem.  This motivates the approximation algorithms we present in Sections \ref{sec-alg} and \ref{sec-fast-alg}.

\begin{theorem}\label{thm-np}
	Finding an optimal AA assignment is NP-hard, even when there are only two servers and all threads have concave utility functions.  
\end{theorem}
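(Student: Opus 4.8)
The plan is to reduce from the \textsc{Partition} problem, which is well known to be (weakly) NP-complete. An instance of \textsc{Partition} is a multiset of positive integers $a_1, \ldots, a_n$ with $\sum_{i=1}^{n} a_i = 2B$, and the question is whether the index set can be split into two parts each summing to exactly $B$. Given such an instance, I would build an AA instance with exactly two servers $s_1, s_2$, each of capacity $C = B$, and $n$ threads whose utility functions are the piecewise-linear ``capped'' functions $f_i(x) = \min(x, a_i)$. Each $f_i$ is nonnegative, nondecreasing, and concave (being the pointwise minimum of two concave linear functions), so it is a valid utility function in our model. We may assume every $a_i \le B$, since any element exceeding $B$ renders \textsc{Partition} a trivial NO instance detectable in polynomial time; this also guarantees that each breakpoint $a_i$ lies inside the domain $[0, C]$. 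The transformation clearly runs in polynomial time, as it merely copies the input numbers.

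The core of the argument is to show that the constructed AA instance admits an assignment of total utility at least $U = 2B = \sum_i a_i$ if and only if the \textsc{Partition} instance is a YES instance; note $U$ is the largest value any assignment could attain, since $f_i(c_i) \le a_i$ for every thread. For the easy direction, given a partition $(S_1, S_2)$ with $\sum_{i \in S_j} a_i = B$ for $j = 1,2$, I assign the threads of $S_j$ to server $s_j$ and allocate each $t_i$ exactly $a_i$ resources; feasibility holds because $\sum_{i \in S_j} a_i = B = C$, and the total utility is $\sum_{i \in T} f_i(a_i) = \sum_i a_i = 2B$. For the converse, suppose some feasible assignment achieves utility $2B$. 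Since $f_i(c_i) = \min(c_i, a_i) \le a_i$ and the total equals $\sum_i a_i$, every such inequality must be tight, forcing $c_i \ge a_i$ for all $i$. Combining with the capacity constraints yields the chain $\sum_i a_i \le \sum_i c_i \le 2C = 2B = \sum_i a_i$, which collapses to $c_i = a_i$ for every $i$ with both servers filled exactly, so $\sum_{i \in S_1} a_i = \sum_{i \in S_2} a_i = B$ is a valid partition.

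The step I expect to be the main obstacle is the converse direction, specifically ruling out that the continuous, real-valued nature of the allocations could be exploited to reach utility $2B$ without inducing an exact integer partition. The tight inequality chain above is what closes this gap: because fractional allocations are still bounded by the aggregate capacity $2B$ and each capped utility is bounded by $a_i$, attaining the maximum simultaneously pins down each $c_i$ to $a_i$ and each server to full usage, leaving no slack for a non-partitioning solution. Since \textsc{Partition} is NP-hard and the reduction is polynomial, it follows that finding an optimal AA assignment is NP-hard even with only two servers.
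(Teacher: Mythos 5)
Your proposal is correct and follows essentially the same approach as the paper's proof: a reduction from \textsc{Partition} with two servers of capacity $B = \frac{1}{2}\sum_i a_i$ and capped linear utilities $f_i(x) = \min(x, a_i)$, with the maximum utility $\sum_i a_i$ attainable exactly when a partition exists. Your converse direction, which pins down $c_i = a_i$ via a single tight inequality chain, is a cleaner packaging of the paper's two-case contradiction argument, but the underlying idea is identical.
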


\begin{proof}
	We give a reduction from the NP-hard \emph{partition problem} \cite{Garey79} to the concave AA problem with two servers.  In the partition problem, we are given a set of numbers $S = \{c_1, \ldots, c_n\}$ and need to determine if there exists a partition of $S$ into sets $S_1$ and $S_2$ such that $\sum_{i \in S_1} c_i = \sum_{i \in S_2} c_i$.  Given an instance of partition, we create an AA instance $A$ with two servers each with $C = \frac{1}{2} \sum_{i=1}^n c_i$ amount of resources.  There are $n$ threads $t_1, \ldots, t_n$, where the $i$'th thread has utility function $f_i$ defined by 
	\begin{equation*}
	f_i(x) = 
	\begin{cases}
	x & \text{if } x \leq c_i \\
	c_i & \text{otherwise}
	\end{cases}
	\end{equation*}
	The $f_i$ functions are nondecreasing and concave. We claim the partition instance has a solution if and only if $A$'s maximum utility is $\sum_{i=1}^n c_i$.  For the if direction, let $A^* = [(r^*_1 ,c^*_1), \ldots, (r^*_n, c^*_n)]$ denote an optimal solution for $A$, and let $S_1$ and $S_2$ be the set of threads assigned to the servers 1 and 2, respectively.  We show that $S_1, S_2$ solve the partition problem.  We first show that $c^*_i = c_i$ for all $i$.  Indeed, if $c^*_i < c_i$ for some $i$, then $f_i(c^*_i) < c_i$, while $f_j(c^*_j) \leq c_j$, for all $j \neq i$.  Thus, $\sum_{j=1}^n f_j(c^*_j) < \sum_{j=1}^n c_j$, which contradicts the assumption that $A^*$'s utility is $\sum_{j=1}^n c_j$.  Next, suppose $c^*_i > c_i$ for some $i$.  Then since $A^*$ is a valid assignment, we have $\sum_{i \in S_1} c^*_i + \sum_{i \in S_2} c^*_i \leq C + C = \sum_{i=1}^n c_i$, and so there exists $j \neq i$ such that $c^*_j < c_j$.  But then $f_j(c^*_j) < c_j$ and $f_k(c^*_k) \leq c_k$ for all $k \neq j$, so $A^*$'s total utility is $\sum_{k=1}^n f_k(c^*_k) < \sum_{i=1}^n c_i$, again a contradiction.  Thus, we have $c^*_i = c_i$ for all $i$, and so $\sum_{i \in S_1} c^*_i + \sum_{i \in S_2} c^*_i = \sum_{i=1}^n c_i = 2C$.  So, since $\sum_{i \in S_1} c^*_i \leq C$ and $\sum_{i \in S_2} c^*_i \leq C$, then $\sum_{i \in S_1} c^*_i = \sum_{i \in S_2} c^*_i = C$.  Hence, $\sum_{i \in S_1} c_i = \sum_{i \in S_2} c_i = C$, and $S_1, S_2$ solve the partition problem.
	
	For the only if direction, suppose $S_1, S_2$ are a solution to the partition instance.  Then since $\sum_{i \in S_1} c_i = \sum_{i \in S_2} c_i = C$, we can assign the threads with indices in $S_1$ and $S_2$ to servers 1 and 2, respectively, and get a valid assignment with utility $\sum_{i=1}^n f_i(c_i) = \sum_{i=1}^n c_i$.  This is a maximum utility assignment for $A$, since $f_i(x) \leq c_i$ for all $i$.  Thus, the partition problem reduces to the AA problem, and so the latter is NP-hard for two servers. 
\end{proof}
\section{Approximation Algorithm for Concave Functions}
\label{sec-alg}
\begin{table}\label{table-notation}\small
\caption{List of notations}
\centering
\begin{tabular}{|c|c|}
\hline
Notations & Definitions\\
\hline
$n$ & Number of all threads \\
$m$ & Number of all servers\\
$S$ & Set of all threads\\
$T$ & Set of all threads\\
$C$ & Each server's resource capacity\\
$f_i$& Original utility function of thread $t_i$\\
$g_i$ & Utility function of thread $t_i$ after linearization\\
$r_i$ & Assignment of thread $t_i$\\
$c_i$ & Resource allocation of thread $t_i$\\
$\hat{c}_i$ & Super-optimal resource allocation of thread $t_i$\\
$E$& Set of unfull threads\\
$D$& Set of full threads\\
$\gamma$ & Maximum super-optimal utility of threads in E\\
$G$ & Total utility at the linear problem\\
$F$ & Total utility at the original problem \\
$F^*$& Optimal total utility at the original problem\\
$\hat{F}$&Super-optimal utility\\
$\alpha$& $2(\sqrt{2}-1)$\\
\hline
\end{tabular}
\end{table}%
In this section, we present an algorithm for the AA problem when the utility function for each thread is concave.  The algorithm outputs an assignment with total utility at least $\alpha = 2(\sqrt{2}-1) > 0.828$ times the optimal, and runs in $O(mn^2 + n(\log mC)^2)$ time. The algorithm consists of two main steps.  The first step transforms the utility functions, which are arbitrary nondecreasing concave functions and difficult to work with algorithmically, into functions consisting of two linear segments, which are easier to handle. Next, we find an $\alpha$-approximate thread assignment for the linearized functions. We then show that this leads to an $\alpha$ approximate solution for the original concave problem. For ease of exposition, we list the notations used in the remainder of the paper in Table I.

\begin{figure}
	\centering 
		\includegraphics[height=4cm,width=5.2cm]{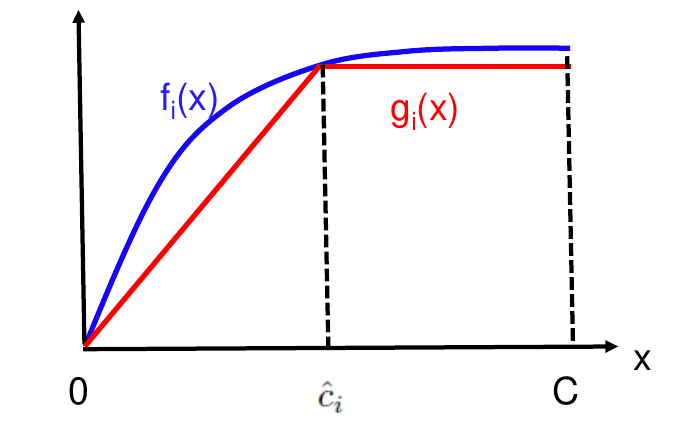}  
	\caption{Illustration of linearization of $f_i(x)$ to $g_i(x)$.} 
	\label{figure-linearization}
\end{figure}

\begin{table*}\label{table-example-alg1}
\small
\caption{Example execution of Algorithm 1}
\centering
\begin{tabular}{|c|c|c|c|c|c|c|}
\hline
Loop index         &$(i,j)$ & Allocation & Assignment & R        & $C_1$ & $C_2$\\
\hline
Loop 1 &(4,1)   & $c_4=6$    & $r_4=1$    &\{1,2, 3\}   &  1     &7 \\
\hline
Loop 2 &(2,2)   & $c_2=3$    & $r_2=2$    &\{1,3\}   &  1     & 3\\
\hline
Loop 3 &(3,2)   & $c_3=3$    & $r_3=2$    &\{1\}   &  1     &1 \\
\hline
Loop 4 &(1,1)   & $c_1=1$    & $r_1=1$    & $\emptyset$  &  0     &1 \\
\hline
\end{tabular}
\end{table*}%

\vspace{-0.2cm}
\subsection{Linearization}\label{sec-alg-linearize}
\vspace{-0.1cm}
To describe the linearization procedure, we start with the following definition.
\vspace{-0.1cm}
\blue{
\begin{definition}\label{def-super-opt}
Given an instance $A$ of the AA problem with $m$ servers each with $C$ amount of resources, and $n$ threads with utility functions $f_1, \ldots, f_n$, consider the quantity
\begin{equation*}
\hat{F} = \max_{c_i, i\in [1,n]}\sum_{i=1}^{n}f(c_i)
\end{equation*}
subject to $\sum_{i=1}^n c_i \leq mC$.  Let $\hat{c}_1, \ldots, \hat{c}_n$ be values for $c_1, \ldots, c_n$, respectively, which achieve the optimum $\hat{F}$.  Then we call $\hat{F} = \sum_{i=1}^n f_i(\hat{c}_i)$ the \emph{super-optimal utility} of $A$, and $\hat{c}_1, \ldots, \hat{c}_n$ the \emph{super-optimal allocation} for $A$.
\end{definition}
}

To motivate the above definition, note that for any valid assignment $[(r_1, c_1), \ldots, (r_n, c_n)]$ for $A$, we have $\sum_{i=1} c_i \leq mC$.  Therefore, the utility of the assignment $\sum_{i=1}^n f_i(c_i)$ is at most $\hat{F}$.  Let $F^*$ denote $A$'s maximum utility.  Then we have the following.
\vspace{-0.1cm}
\begin{lemma}
	\label{lem-alg-supopt}
	$F^* \leq \hat{F}$.
\end{lemma}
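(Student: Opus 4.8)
The plan is to exhibit the super-optimal problem as a relaxation of the original assignment problem, so that the optimum of the relaxation dominates the optimum of the constrained problem. Concretely, I would start from an optimal assignment $A^* = [(r_1^*, c_1^*), \ldots, (r_n^*, c_n^*)]$ achieving the maximum utility $F^* = \sum_{i=1}^n f_i(c_i^*)$, and argue that its allocation vector $(c_1^*, \ldots, c_n^*)$ is a feasible point for the super-optimal allocation problem. Since the super-optimal utility $\hat F$ is by definition the maximum of $\sum_i f_i(\hat c_i)$ over all feasible points, feasibility of $(c_1^*, \ldots, c_n^*)$ immediately gives $F^* \leq \hat F$.

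The one substantive step is verifying feasibility, i.e. checking that the single aggregate constraint $\sum_{i=1}^n c_i^* \leq mC$ holds. This follows by summing the per-server capacity constraints: with $S_j = \{i \mid r_i^* = j\}$ forming a partition of the thread set $T$, we have $\sum_{i=1}^n c_i^* = \sum_{j=1}^m \sum_{i \in S_j} c_i^*$, and each inner sum is at most $C$ because $A^*$ is a valid assignment, so the total is at most $mC$. Thus the allocation vector of any valid assignment, and in particular of $A^*$, satisfies the constraint defining feasibility for the super-optimal problem.

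I expect there to be essentially no hard obstacle here, since the lemma is a straightforward relaxation bound; the content is entirely in observing that dropping the per-server structure and replacing the $m$ separate constraints $\sum_{i \in S_j} c_i \leq C$ by the weaker aggregate constraint $\sum_i c_i \leq mC$ can only enlarge the feasible region and hence can only increase the optimal value. The only point requiring a line of care is that the map from an assignment to its allocation vector discards the server indices $r_i^*$, so one should be explicit that the super-optimal problem imposes no partition or per-server requirement at all, which is exactly why $(c_1^*, \ldots, c_n^*)$ remains feasible after this information is forgotten. This is precisely the intuition already noted in the paragraph preceding the lemma statement, so the proof simply formalizes that remark.
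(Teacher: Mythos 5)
Your proof is correct and follows the same route as the paper: the paragraph preceding the lemma makes exactly this relaxation observation, namely that any valid assignment's allocation satisfies $\sum_{i=1}^n c_i \leq mC$ (by summing the per-server constraints) and is therefore feasible for the super-optimal problem, so the optimal utility $F^*$ is bounded by $\hat{F}$. Your write-up merely spells out the feasibility check more explicitly than the paper does.
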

\vspace{-0.1cm}
Thus, to find an $\alpha$ approximate solution to $A$, it suffices to find an assignment with total utility at least $\alpha \hat{F}$.  We note that the problem of finding $\hat{F}$ and the associated super-optimal allocation can be solved in $O(n (\log mC)^2)$ time using the algorithm from \cite{Galil}, since the $f_i$ functions are concave.  Also, since these functions are nondecreasing, we have the following basic property.
\vspace{-0.1cm}
\begin{lemma}
\label{lem-alg-useall}
$\sum_{i=1}^n \hat{c}_i = mC$.
\end{lemma}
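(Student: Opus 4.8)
The plan is to prove the equality $\sum_{i=1}^n \hat{c}_i = mC$ directly for the super-optimal allocation $\hat{c}_1, \ldots, \hat{c}_n$ we actually work with (the maximizer fixed by the Definition, which by the preceding remark is the output of Galil's procedure \cite{Galil}), rather than only asserting that some maximizer spends the full budget. As the surrounding text advertises, the argument will rest entirely on the fact that each $f_i$ is nondecreasing, together with the way $\hat{c}$ is produced.

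First I would argue by contradiction against optimality. Suppose the feasible allocation $\hat{c}$ leaves slack, i.e. $\sum_{i=1}^n \hat{c}_i < mC$, and put $\delta = mC - \sum_{i=1}^n \hat{c}_i > 0$. I then transfer all of this slack onto a single thread, defining $\hat{c}_1' = \hat{c}_1 + \delta$ and $\hat{c}_i' = \hat{c}_i$ for $i \geq 2$. The allocation $\hat{c}'$ is feasible, since $\sum_{i=1}^n \hat{c}_i' = mC$, and because $f_1$ is nondecreasing we have $f_1(\hat{c}_1') \geq f_1(\hat{c}_1)$, so $\sum_{i=1}^n f_i(\hat{c}_i') \geq \sum_{i=1}^n f_i(\hat{c}_i) = \hat{F}$. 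Hence $\hat{c}'$ is itself super-optimal and, crucially, consumes the entire budget $mC$.

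The one genuine subtlety, and the step I expect to need the most care, is that monotonicity here is only weak, so a maximizer with slack is not by itself a strict contradiction: if $f_1$ is flat on $[\hat{c}_1, \hat{c}_1 + \delta]$, then $\hat{c}'$ merely ties $\hat{c}$ rather than strictly beating it, and the perturbation produces a second optimum instead of refuting the first. I would resolve this exactly as the construction dictates rather than by a bare existence hedge: the allocation named $\hat{c}$ is the one returned by Galil's algorithm, whose equal-derivative, budget-exhausting construction already places $\sum_{i=1}^n \hat{c}_i = mC$, and the perturbation above supplies precisely the missing justification that doing so sacrifices no utility, i.e. that a full-budget (equal-derivative) allocation is genuinely super-optimal. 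Combining the two, the specific $\hat{c}$ under consideration satisfies $\sum_{i=1}^n \hat{c}_i = mC$, which is the claim. This full-budget property is what later lets us treat $\hat{F}$ as the utility of an allocation that spends all $mC$ units, a fact used when relating $\hat{F}$ to the multiple-server optimum in the analysis that follows.
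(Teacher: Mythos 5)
Your proof is correct and matches the paper's reasoning: the paper states this lemma with no proof at all, justified only by the preceding remark that the $f_i$ are nondecreasing, which is precisely the monotone-reallocation argument you spell out (transfer any slack $\delta = mC - \sum_{i=1}^n \hat{c}_i$ to one thread without losing utility). Your additional care about the weak-monotonicity subtlety --- noting that an arbitrary maximizer need not exhaust the budget (e.g.\ if some $f_i$ is flat), so the claim should be read as holding for the specific budget-exhausting allocation returned by Galil's algorithm, which the reallocation argument certifies as genuinely super-optimal --- is a point the paper glosses over entirely, and your treatment is the more precise one.
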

\vspace{-0.1cm}
In the remainder of this section, fix $A$ to be an AA problem consisting of $m$ servers with $C$ resources each and $n$ threads with utility functions $f_1, \ldots, f_n$.   Let $[\hat{c}_1, \ldots, \hat{c}_n]$ be a super-optimal allocation for $A$ computed as in \cite{Galil}.  We define the \emph{linearized} version of $A$ to be another AA problem $B$ with the same set of servers and threads, but where the threads have piecewise linear utility functions $g_1, \ldots, g_n$ defined by
\begin{equation}\label{eqn-g}
g_i(x)=
\begin{cases}
f_i(\hat{c}_i) \frac{x}{\hat{c}_i}  & \text{if } x < \hat{c}_i \\
f_i(\hat{c}_i) & \text{otherwise}
\end{cases}
\end{equation}

Figure \ref{figure-linearization} shows how $f_i(x)$ is linearized to $g_i(x)$. The blue curve is $f_i(x)$ and the red curve is $g_i(x)$.  We first prove the following basic property. 

\vspace{-0.1cm}
\begin{lemma}
	\label{lem-alg-fvsg}
	For any $i \in T$ and $x \in [0, C]$, $f_i(x) \geq g_i(x)$.
\end{lemma}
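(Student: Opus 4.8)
The plan is to verify the inequality pointwise, splitting the domain $[0,C]$ at the breakpoint $\hat{c}_i$ of the piecewise-linear function $g_i$ and treating each regime with the single structural assumption relevant there. Observe first that both pieces of $g_i$ meet $f_i$ at the common point $(\hat{c}_i, f_i(\hat{c}_i))$ by construction, so the content of the lemma is entirely about the behavior of the two functions away from this shared point, and on each side one of the assumptions on $f_i$ does all the work.

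First I would dispose of the range $x \ge \hat{c}_i$, where $g_i(x) = f_i(\hat{c}_i)$ is constant. Here the claim is immediate from $f_i$ being nondecreasing: since $x \ge \hat{c}_i$, we have $f_i(x) \ge f_i(\hat{c}_i) = g_i(x)$. (This case also absorbs the degenerate possibility $\hat{c}_i = 0$, for which the other regime is empty and no division by $\hat{c}_i$ occurs.)

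The more interesting range is $x < \hat{c}_i$, where $g_i$ is the line through the origin and $(\hat{c}_i, f_i(\hat{c}_i))$, that is, $g_i(x) = (x/\hat{c}_i)\, f_i(\hat{c}_i)$. I would write $x$ as the convex combination $x = \lambda\,\hat{c}_i + (1-\lambda)\cdot 0$ with $\lambda = x/\hat{c}_i \in [0,1)$, and invoke concavity of $f_i$ to get $f_i(x) \ge \lambda\, f_i(\hat{c}_i) + (1-\lambda)\, f_i(0)$. Since $f_i$ is nonnegative, the term $(1-\lambda)\, f_i(0)$ is nonnegative, so discarding it yields $f_i(x) \ge \lambda\, f_i(\hat{c}_i) = g_i(x)$, as required. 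Combining the two cases covers all of $[0,C]$ and finishes the proof.

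I do not expect a genuine obstacle, but the one subtlety worth flagging is that the first segment of $g_i$ is the chord through the origin rather than through $(0, f_i(0))$; the two differ precisely by the offset $(1-\lambda)\, f_i(0)$. It is the nonnegativity of $f_i$ — and not concavity alone — that lets us absorb this offset, so that assumption is genuinely used in the $x < \hat{c}_i$ case and should be cited explicitly there.
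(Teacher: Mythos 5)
Your proof is correct and follows essentially the same argument as the paper: concavity of $f_i$ applied to the convex combination $x = \lambda\hat{c}_i + (1-\lambda)\cdot 0$, followed by dropping the nonnegative term $(1-\lambda)f_i(0)$, with monotonicity handling $x \geq \hat{c}_i$. Your explicit treatment of the degenerate case $\hat{c}_i = 0$ is a small point of extra care the paper omits, but it does not change the substance.
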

\begin{proof}
	For $x \in [0, \hat{c}_i]$, we have 
	\begin{eqnarray*}
	f_i(x) &\geq& \frac{\hat{c}_i - x}{\hat{c}_i} f_i(0) + \frac{x}{\hat{c}_i} f_i(\hat{c}_i)\\
	       &\geq& g_i(x),
	\end{eqnarray*}
	where the first inequality follows because $f_i$ is concave, and the second inequality follows because $f_i(0) \geq 0$.  Also, for $x > \hat{c}_i$, $f_i(x) \geq f_i(\hat{c}_i) = g_i(x)$. Hence, the lemma holds.  
\end{proof}
Lemma \ref{lem-alg-fvsg} implies that to find an assignment with total utility at least $\alpha \hat{F}$ at Problem $A$, it suffices to find an assignment with total utility at least $\alpha \hat{F}$ at the linearized problem.

\vspace{-0.1cm}
\subsection{Approximation algorithm for linearized problem}
\label{sec-solve-linear}
\vspace{-0.1cm}
We now describe an $\alpha$ approximation algorithm for the linearized problem.  The pseudocode is given in Algorithm 1.  The algorithm takes as input a super-optimal allocation $[\hat{c}_1, \ldots, \hat{c}_n]$ for $A$ and the resulting linearized utility functions $g_1, \ldots, g_n$, as described in Section \ref{sec-alg-linearize}. Variable $C_j$ represents the amount of resource left on server $j$, and $R$ is the set of unassigned threads. The outer loop of the algorithm runs until all threads in $R$ have been assigned. During each iteration, $U$ is the set of (thread, server) pairs such that the server has at least as much remaining resource as the thread's super-optimal allocation.  If any such pairs exist, then in line 6 we find a thread in $U$ with the greatest utility using its super-optimal allocation.  Otherwise, in line 9 we find a thread that can obtain the greatest utility using the remaining resources of any server. In both cases we assign the thread in line 12 to a server giving it the greatest utility. Lastly, we update the server's remaining resources accordingly.     

\begin{algorithm}\small	
	\begin{algorithmic}[1]
		\caption{}
		\Input Super-optimal allocation $[\hat{c}_1, \ldots,  \hat{c}_n]$, and $g_1, \ldots, g_n$ as defined in Equation \ref{eqn-g} 
		\Statex
		\State $C_j \gets C$ for $j=1, \ldots, m$
		\State $R \gets \{1, \ldots, n\}$
		\While{$R \neq \emptyset$}
		\State $U \gets \{ (i,j) \, | \, (i \in R) \wedge (1 \leq j \leq m) \wedge (\hat{c}_i \leq C_j) \}$
		\If{$U \neq \emptyset$}
		\State $(i,j) \gets \argmax_{(i,j) \in U} \,   g_i(\hat{c}_i)$
		\State $c_i \gets \hat{c}_i$
		\Else
		\State $(i,j) \gets \argmax_{i \in R, 1 \leq j \leq m} \,   g_i(C_j)$
		\State $c_i \gets C_j$
		\EndIf
		\State $r_i \gets j$
		\State $R \gets R - \{i\}$
		\State $C_j \gets C_j - c_i$
		\EndWhile
		\State \textbf{return} $(r_1,c_1),\ldots,(r_n, c_n)$\label{line-final-allocation}
	\end{algorithmic}
\end{algorithm}

\subsubsection{An example execution}
We first give a simple example to illustrate Algorithm 1. Suppose there are 2 servers, 4 threads, and each server has $C = 7$ units of resource.  Let $\hat{c}_1=2, \hat{c}_2=3, \hat{c}_3=3, \hat{c}_4=6$, $f_1(\hat{c}_1)=3, f_2(\hat{c}_2)=6, f_3(\hat{c}_3)=4$ and  $f_4(\hat{c}_4)=7$. Then $g_1(\hat{c}_1)=3, g_2(\hat{c}_2)=6, g_3(\hat{c}_3)=4, g_4(\hat{c}_4)=7$.  Table II shows how Algorithm 1 produces the assignment and allocation for the linearized problem. The algorithm runs for $n=4$ iterations. In each iteration, the column labeled $(i,j)$ shows the (thread, server) pair which increases the overall utility by the largest amount, as in Line 6, 9 of Algorithm 1, column \emph{Allocation} shows the final allocation of the thread in the pair, and column \emph{Assignment} shows the final assignment of the thread in the pair. Initially $C_1=7, C_2=7$, and $R=\{1,2,3,4\}$.  In the first iteration, $U$ consists of all (thread, server) pairs. The pair (4,1) results in the largest utility increase of 7. Thus, $(i,j)=(4,1)$ in line 6 and the algorithm assigns thread $t_4$ to server $s_1$ in line 13.  It allocates 6 units of resource on server $s_1$ to $t_4$ in line 7, sets $R=\{1,2,3\}$, $C_1=1$ and leaves $C_2$ unchanged.  In the second iteration, $U=\{(1,2), (2,2),(3,2)\}$ in line 4, in the third iteration $U=\{(3,2)\}$, and in the fourth iteration $U=\emptyset$, after which the algorithm terminates.


\vspace{-0.1cm}
\subsection{Analyzing the linearized algorithm} 
\label{sec-alg1-analy}
\vspace{-0.1cm}
We now analyze the quality of the assignment produced by Algorithm 1.  We first define some notations.  Let $D = \{ i \in T \, | \, c_i = \hat{c}_i \}$ be the set of threads whose allocation in Algorithm 1 equals its super-optimal allocation, and let $E = T - D$ be the remaining threads.  We say the threads in $D$ are \emph{full}, and the threads in $E$ are \emph{unfull}. Note that full threads are the ones computed in line 6, and unfull threads are computed in line 9. 

The full threads have the same utility in the super-optimal allocation and the allocation produced by Algorithm 1.  Thus, to show Algorithm 1 achieves a good approximation ratio it suffices to show the utilities of the unfull threads in Algorithm 1 are sufficiently large compared to their utilities in the super-optimal allocation. We first show some basic properties about the unfull threads.  
\vspace{-0.1cm} 
\begin{lemma}
	\label{lem-alg-oneEperserver}
At most one thread from $E$ is assigned to any server.
\end{lemma}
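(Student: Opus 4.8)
The plan is to exploit the two-phase structure of Algorithm~1 and then finish with a resource-counting argument. First I would observe that a thread ends up in $E$ exactly when it is assigned through the second branch of the conditional (lines~9--10), which is taken precisely when $U = \emptyset$, i.e.\ when no remaining thread fits its super-optimal allocation on any server, so that $\hat c_i > C_j$ for every unassigned $i$ and every $j$. I would then check that once $U$ becomes empty it stays empty: an unfull assignment sets some $C_j$ to $0$ and removes a thread from $R$, so the remaining capacities only shrink and $R$ only shrinks, and hence $\hat c_{i'} > C_{j'}$ continues to hold for every remaining pair. Consequently the unfull threads are exactly the threads still in $R$ at the first iteration where $U = \emptyset$, they are assigned in one contiguous final block, and each takes the entire remaining capacity $C_j$ of its server (line~10), thereby zeroing that server.

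The crux is a counting bound showing there are strictly more servers with positive remaining capacity than there are unfull threads. Fix the first iteration at which $U = \emptyset$, let $E$ be the set of remaining threads, and let $P = \{\,j : C_j > 0\,\}$. Since every previously assigned (full) thread consumed exactly $\hat c_i$, and since $\sum_{i \in T}\hat c_i = mC$ by Lemma~\ref{lem-alg-useall}, the total leftover capacity satisfies $\sum_{j} C_j = mC - \sum_{i \notin E}\hat c_i = \sum_{i \in E}\hat c_i$, and of course $\sum_{j} C_j = \sum_{j \in P} C_j$. Because $U = \emptyset$ gives $\hat c_i > C_j$ for every $i \in E$ and every $j$, summing over $j \in P$ yields $|P|\,\hat c_i > \sum_{j \in P} C_j = \sum_{i' \in E}\hat c_{i'}$ for each $i \in E$; here $\sum_{i\in E}\hat c_i > 0$, since any thread with $\hat c_i = 0$ would satisfy $\hat c_i \le C_j$ and keep $U$ nonempty. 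Summing this inequality over $i \in E$ and cancelling the positive quantity $\sum_{i' \in E}\hat c_{i'}$ gives $|P| > |E|$.

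Finally I would conclude that each unfull thread is placed on its own server. By the definition of $g_i$ in Equation~\ref{eqn-g} a zeroed server yields utility $g_i(0) = 0$, whereas whenever an unfull thread remains the invariant $|R| < |P|$ (with $|R| \ge 1$) guarantees that a positive-capacity server is available, on which an unfull thread earns strictly positive utility; hence the selection in line~9 always picks a server in $P$, which it then zeroes. Maintaining $|R| < |P|$ throughout the unfull phase (both sets lose one element per iteration) forces every unfull thread onto a distinct positive-capacity server, so no server receives two. I expect the main obstacle to be the counting inequality $|P| > |E|$ --- in particular making precise that the leftover capacity equals $\sum_{i\in E}\hat c_i$ and that the strict per-thread inequalities $\hat c_i > C_j$ aggregate correctly --- together with the minor point of verifying that the greedy never wastes an unfull thread on an already-zeroed server.
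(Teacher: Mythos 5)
Your proof is correct, but it takes a genuinely different route from the paper's. The paper argues by contradiction, entirely locally: if two unfull threads $t_a, t_b$ were assigned to the same server $s_k$, then at the moment $t_b$ is assigned, $s_k$ has zero remaining capacity (since $t_a$ drained it), and every other server must also be exhausted (otherwise the greedy choice in line 9 would have sent $t_b$ there for strictly more utility); hence the total allocation is already $mC$, yet unfull threads receive strictly less than their super-optimal allocations, giving $\sum_{i} \hat{c}_i > \sum_i c_i \geq mC$ and contradicting Lemma \ref{lem-alg-useall}. You instead give a direct, global argument: you characterize $E$ as exactly the block of threads remaining when $U$ first becomes empty (using the monotonicity observation that $U$ stays empty), prove the counting bound $|P| > |E|$ at that instant via $\sum_j C_j = \sum_{i \in E} \hat{c}_i$ and the per-pair inequalities $\hat{c}_i > C_j$, and then run an invariant $|R| < |P|$ through the unfull phase. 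Both proofs rest on the same two pillars --- Lemma \ref{lem-alg-useall} and the fact that the greedy strictly prefers servers with more remaining capacity --- and both share the same implicit assumption that an unfull thread's linearized utility is strictly increasing (i.e.\ $f_i(\hat{c}_i) > 0$), without which ties in line 9 could defeat either argument. What your longer route buys is a strictly stronger structural fact: $|P| > |E|$ immediately yields not only Lemma \ref{lem-alg-oneEperserver} but also the paper's subsequent lemma $|E| \leq m-1$ (since $|E| < |P| \leq m$), which the paper must prove by a separate contradiction argument; the paper's route buys brevity.
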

\begin{proof}
\vspace{-0.1cm}
	Suppose for contradiction there are two threads $t_a, t_b$ with $a, b\in E$ assigned to a server $s_k$, and assume that $t_a$ was assigned before $t_b$.  Consider the time of $t_b$'s assignment, and let $S_j$ denote the set of threads assigned to a server $s_j$.  We have $\sum_{i \in S_k} c_i = C$, because $a \in E$, and so $t_a$ was allocated all of $s_k$'s remaining resources in lines 10 and 14 of Algorithm 1.  Also, $\sum_{i \in S_j} c_i = C$ for any $j \neq k$. Indeed, if $\sum_{i \in S_j} c_i < C$ for any $j \neq k$, then $s_j$ has more remaining resources than $s_k$, and so $t_b$ would be assigned to $s_j$ instead of $s_k$ because it can obtain more utility.  Thus, together we have that when $t_b$ is assigned, $\sum_{i \in T} c_i \geq \sum_{j=1}^m \sum_{i \in S_j}c_i = mC$.  Now, $c_i \leq \hat{c}_i$ for all $i \in T$.  Also, since $a, b \in E$, then $c_a < \hat{c}_a$ and $c_b < \hat{c}_b$.  Thus, we have $\sum_{i \in T} \hat{c}_i > \sum_{i \in T} c_i \geq mC$, which is a contradiction because $\sum_{i \in T} \hat{c}_i = mC$ by Lemma \ref{lem-alg-useall}.
\end{proof}
\vspace{-0.1cm}
\begin{lemma}\label{lem-numEthread}
	$|E| \leq m-1$.
\end{lemma}
\vspace{-0.1cm}
\begin{proof}
	Lemma \ref{lem-alg-oneEperserver} implies that $|E| \leq m$, so it suffices to show $|E| \neq m$.  Assume for contradiction $|E| = m$. Then by Lemma \ref{lem-alg-oneEperserver}, for each server $s_k$ there exists a thread $t_a$, $a\in E$ assigned to $s_k$. $t_a$ receives all of $s_k$'s remaining resources, and so $\sum_{i \in S_k} c_i = C$ after its assignment.	Then after all $m$ threads in $E$ have been assigned, we have $\sum_{i \in T} c_i = mC$.  But since $c_a < \hat{c}_a$ for all $a \in E$, and $c_i \leq \hat{c}_i$ for all $i \in T$, we have $\sum_{i \in T} \hat{c}_i > \sum_{i \in T} c_i = mC$, which is a contradiction.  Thus, $|E| \neq m$ and the lemma follows.  
\end{proof}
\vspace{-0.1cm}

The next lemma shows that the total resources allocated to the unfull threads in Algorithm 1 is not too small compared to their super-optimal allocation. We first briefly outline the main proof idea. Note that some servers may have unallocated resources after Algorithm 1 terminates, since there may be some unfull threads which do not get their super-optimal allocations.  Let $D^s$ be the set of \emph{servers} containing only full threads after Algorithm 1 terminates, and let $E^s$ be the remaining servers containing some unfull threads. Recall that $C_j$ represents the unallocated resources in a server $j$.  Then for any thread $i \in E$ and any server $j \in D^s$, we have $c_i\geq  C_j$, because in each iteration of thread assignment we assign a thread to a server giving it the largest utility. Thus, we can show $\frac{\sum_{i\in E} c_i}{\sum_{i\in E} c_i+\sum_{j\in D^s} C_j}\geq \frac{|E|}{|E|+|D^s|}$.  Also, we can show $\sum_{i\in E} c_i+\sum_{j \in D^s} C_j = \sum_{i \in E} \hat{c}_i$. Additionally, the total number of unfull threads and unfull servers (i.e. $|E|+|D^s|$) is equal to the total number of servers $m$, since each unfull thread is distributed to a different server in $E^s$.  From this, we can derive that $\frac{\sum_{i\in E}c_i}{\sum_{i\in E}\hat{c}_i}\geq \frac{|E|}{m}$, as stated in the following lemma.

\vspace{-0.1cm}
\begin{lemma}
	\label{lem-alloc-wrong}
	$\sum_{i\in E}c_i\geq \frac{|E|}{m} \sum_{i\in E}\hat{c}_i$.
\end{lemma}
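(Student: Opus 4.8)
The plan is to reduce the claimed inequality to a single bound on the total amount of resource that Algorithm~1 leaves \emph{unused}. Write $e = |E|$. Since every full thread receives exactly $c_i = \hat c_i$, the total resource handed out is $\sum_{i \in T} c_i = \sum_{i \in D}\hat c_i + \sum_{i \in E} c_i$. Letting $R$ denote the total unused resource over all servers at termination, I would use $\sum_{i\in T} c_i = mC - R$ together with Lemma~\ref{lem-alg-useall}, which gives $\sum_{i \in T}\hat c_i = \sum_{i\in D}\hat c_i + \sum_{i\in E}\hat c_i = mC$, to derive the identity $\sum_{i \in E} c_i = \sum_{i\in E}\hat c_i - R$. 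Substituting this into the target shows that the lemma is equivalent to the single estimate $R \le \frac{m-e}{m}\sum_{i\in E}\hat c_i$, so it suffices to bound the wasted resource $R$.

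Next I would establish that the unfull assignments form a single contiguous block at the end of the execution. The key observation is that once the set $U$ in line~4 becomes empty it stays empty: assigning an unfull thread only deletes a thread from $R$ and drives one server's remaining capacity to $0$ (lines~10 and~14), neither of which can create a new pair $(i,j)$ with $\hat c_i \le C_j$. Hence the full threads are assigned during a prefix of the run and the unfull ones during a suffix, every server hosting an unfull thread ends the phase at $0$ remaining capacity and is never touched again, and every server hosting only full threads keeps, throughout the unfull phase, exactly the capacity it had when that phase began. Denoting these beginning-of-phase capacities $C'_1,\dots,C'_m$, I note that $\sum_{j=1}^m C'_j = mC - \sum_{i\in D}\hat c_i = \sum_{i\in E}\hat c_i$, and that $R$ equals the sum of the $C'_j$ over the full-only servers.

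The heart of the argument is identifying which servers are the full-only ones. During the unfull phase each selected thread is placed in line~9 on the server maximizing $g_i(C_j)$; since every remaining capacity $C_j$ is strictly below $\hat c_i$ and $g_i$ is strictly increasing on $[0,\hat c_i)$, this server is precisely the one with the largest remaining capacity. Because the capacities of the not-yet-filled servers are frozen during this phase, and because Lemma~\ref{lem-alg-oneEperserver} guarantees the $e$ threads of $E$ land on $e$ distinct servers, the servers receiving unfull threads are exactly those carrying the $e$ largest values among $C'_1,\dots,C'_m$, leaving the $m-e$ smallest on the full-only servers. The sum of the smallest $m-e$ of $m$ nonnegative numbers is at most $\frac{m-e}{m}$ of their total, since the complementary $e$ largest have above-average mean; thus $R \le \frac{m-e}{m}\sum_{j}C'_j = \frac{m-e}{m}\sum_{i\in E}\hat c_i$, which is exactly the bound needed.

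I expect the main obstacle to be the middle step: rigorously justifying that the greedy choice in line~9 always lands on a maximum-capacity server and that no full assignment can ever interleave after an unfull one. Both hinge on the monotonicity of the $g_i$ and on the monotone behavior of $U$, and some care is needed with ties and with degenerate servers whose capacity has already reached $0$ (such a server can never be chosen in line~9 because $g_i(0)=0$, so it never receives an unfull thread, and indeed a positive-capacity server is always available whenever an unfull thread must be placed). Once the contiguous-phase and largest-capacity facts are pinned down, the remaining computation and the averaging bound are routine.
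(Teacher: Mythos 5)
Your proposal is correct, but its second half takes a genuinely different route from the paper's proof. The opening reduction is the same: your identity $\sum_{i\in E} c_i = \sum_{i\in E}\hat{c}_i - R$ is exactly the paper's Equation~\ref{eqn-alloc-wrong1}, since the wasted resource $R$ sits entirely on the servers holding only full threads. From there the paper argues by contradiction: if the lemma failed, then (using Lemma~\ref{lem-alg-oneEperserver} to get $|U| = m - |E|$ full-only servers) averaging produces one full-only server $s_j$ with $C_j > \frac{1}{m}\sum_{i\in E}\hat{c}_i$, and a purely local exchange argument --- if some unfull thread had $c_i < C_j$, the greedy step would have assigned it to $s_j$ instead --- forces $\sum_{i\in E} c_i \geq |E|\, C_j$, a contradiction. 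You instead establish a stronger structural fact directly: the execution splits into a full phase followed by an unfull phase (by monotonicity of the set $U$ in line 4), each unfull thread seizes the currently largest remaining capacity, so the unfull threads capture precisely the $|E|$ largest start-of-phase capacities and $R$ is the sum of the $m-|E|$ smallest, after which the averaging bound is immediate. Both proofs rest on the same two pillars (the accounting identity plus the greedy rule), but yours is a direct proof that pins down exactly which resources are wasted, while the paper's is more local: it never needs the phase decomposition or the max-capacity selection rule, only the pairwise comparison $c_i \geq C_j$. That locality is what the paper exploits later --- in Algorithm 2 (Section~\ref{sec-fast-alg}) full and unfull assignments can interleave, so your phase decomposition does not transfer, whereas the pairwise claim is the single statement the paper rechecks in Section~\ref{sec-fast-alg-analy}. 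A minor point in your favor: your treatment of ties and zero-capacity servers is more careful than the paper's exchange step, which implicitly assumes strictly positive utilities so that ``can obtain more utility'' is a strict preference.
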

\begin{proof}
We first partition the servers into sets $U$ and $V$, where $U = \{j \in S \, | \, S_j \subseteq D\}$ is the set of servers containing only full threads, and $V = S-U$ are the servers containing some unfull threads. Let $C_j = C - \sum_{i \in S_j} c_i$ be the amount of unused resources on a server $s_j$ at the end of Algorithm 1.  Then $C_j = 0$ for all $j \in V$, since the unfull thread in $S_j$ was allocated all the remaining resources on $s_j$. So, we have 
\begin{eqnarray*}
\sum_{j \in U} C_j &=& \sum_{j \in S} C_j =  \sum_{j \in S} (C - \sum_{i \in S_j} c_i)\\ 
                   &=& mC - \sum_{i \in T} c_i, 
\end{eqnarray*}
and so 
\begin{equation}\label{eqn-sum-ci}
\sum_{i \in T} c_i = mC - \sum_{i \in U} C_j.
\end{equation}
	
	Next, we have 
\begin{eqnarray*}
	\sum_{i \in T} c_i &=& \sum_{i \in D} \hat{c}_i + \sum_{i \in E} c_i\\ 
	                   &=& mC - \sum_{i \in E} \hat{c}_i + \sum_{i \in E} c_i.
\end{eqnarray*}		
The first equality follows because $c_i = \hat{c}_i$ for $i \in D$, and the second equality follows because $D \cup E = T$ and $\sum_{i \in T} \hat{c}_i = mC$. Combining this with the earlier expression for $\sum_{i \in T} c_i$ in Equation \ref{eqn-sum-ci}, we have 
\begin{equation}
mC - \sum_{i \in U} C_j = mC - \sum_{i \in E} \hat{c}_i + \sum_{i \in E} c_i,
\end{equation} 
and so
	\begin{equation}
	\label{eqn-alloc-wrong1}
	\sum_{i \in E} c_i + \sum_{i \in U} C_i = \sum_{i \in E} \hat{c}_i.
	\end{equation}	
	Now, assume for contradiction that	$\sum_{i\in E}c_i < \frac{|E|}{m} \sum_{i\in E}\hat{c}_i$.  Then by Equation \ref{eqn-alloc-wrong1} we have
	\begin{equation}
	\label{eqn-alloc-wrong2}
	\sum_{i \in U} C_i > \frac{m-|E|}{m} \sum_{i \in E} \hat{c}_i.
	\end{equation}
	We have $|V| = |E|$, since by Lemma \ref{lem-alg-oneEperserver} each server in $V$ contains only one unfull thread.  Thus $|U| = m - |V| = m - |E|$. Using this in Equation \ref{eqn-alloc-wrong2}, we have that there exists an $j \in U$ with \begin{equation}
	\label{eqn-alloc-wrong3}
	C_j \geq \frac{1}{|U|} \sum_{i \in U} C_i > \frac{1}{m} \sum_{i \in E} \hat{c}_i.
	\end{equation}
	We claim that for all $i \in E, j\in U$, $c_i \geq C_j$.  Indeed, suppose  $c_i < C_j$ for some $i$. But since $C_j > c_i$,  $t_i$ should be allocated to $s_j$ because it can obtain greater utility on $s_j$ than its current server, which is a contradiction.  Thus, $c_i \geq C_j$ for all $i \in E$.  Using this and Equation \ref{eqn-alloc-wrong3}, we have $$\sum_{i \in E} c_i \geq \sum_{i \in E} C_j = |E| C_j >  \frac{|E|}{m} \sum_{i \in E} \hat{c}_i$$  However, this contradicts the assumption that $\sum_{i\in E} c_i < \frac{|E|}{m} \sum_{i\in E}\hat{c}_i$.  Thus, the lemma follows.
\end{proof}

\vspace{-0.1cm}
Let $\gamma = \max_{i \in E} g_i(\hat{c}_i)$ be the maximum super-optimal utility of any thread in $E$.  The following lemma says that all of the first $m$ threads assigned by Algorithm 1 are given their super-optimal allocations and have utility at least $\gamma$.
\vspace{-0.1cm}
\begin{lemma}
	\label{lem-Dthruput}
	Let $t_i$ be one of the first $m$ threads assigned by Algorithm 1.  Then $i \in D$ and $g_i(c_i) \geq \gamma$.
\end{lemma}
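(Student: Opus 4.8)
The plan is to prove the two assertions in turn, both resting on a single structural fact: during the first $m$ iterations of the while loop the condition in line 5 always holds, so line 6 (the ``full'' branch) is taken every time. Granting this, the first $m$ threads each receive $c_i = \hat{c}_i$ in line 7 and hence lie in $D$, which is the first assertion; it also follows that every unfull thread, being selected in line 9, is assigned only at some iteration numbered strictly greater than $m$.

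To show $U \neq \emptyset$ during the first $m$ iterations I would use a counting argument on the servers. Each iteration selects a single pair $(i,j)$ and decrements the capacity of exactly one server in line 14, so after $k-1$ iterations at most $k-1$ servers have had their capacity reduced below $C$. Thus when the $k$-th thread is about to be assigned with $k \leq m$, at least $m-(k-1) \geq 1$ servers still hold their full capacity $C$. Since each utility function has domain $[0,C]$, the super-optimal allocation satisfies $\hat{c}_i \leq C$ for every thread; hence for any unassigned thread $i$ and any full-capacity server $j$ we have $\hat{c}_i \leq C = C_j$, so $(i,j) \in U$ and $U$ is nonempty. This forces the full branch and completes the first assertion.

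For the utility bound, fix one of the first $m$ threads $t_i$, selected at some iteration $k \leq m$, and fix an arbitrary unfull thread $t_{i'} \in E$. By the remark above, $t_{i'}$ is assigned only after iteration $m \geq k$, so at iteration $k$ it is still in $R$; and by the counting argument there is a full-capacity server $j^{*}$ at iteration $k$ with $C_{j^{*}} = C \geq \hat{c}_{i'}$, so the pair $(i',j^{*})$ lies in $U$ at that moment. Since line 6 selects a pair maximizing the super-optimal utility $g_i(\hat{c}_i)$ over $U$ and it chose $t_i$ over the available $t_{i'}$, we obtain $g_i(\hat{c}_i) \geq g_{i'}(\hat{c}_{i'})$. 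Maximizing over $i' \in E$ gives $g_i(c_i) = g_i(\hat{c}_i) \geq \gamma$ (the bound being vacuous if $E = \emptyset$).

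The delicate part is not any computation but the timing bookkeeping: the argument hinges on each unfull thread being simultaneously (i) still unassigned and (ii) eligible, i.e. paired with a full-capacity server in $U$, at the very iteration when a competing full thread is chosen. Both facts follow from the two invariants above --- ``unfull threads come after iteration $m$'' and ``a full-capacity server survives through iteration $m$'' --- together with the repeated use of $\hat{c}_i \leq C$. Verifying that these invariants genuinely hold at the precise iteration in question, rather than merely on average, is the step I would be most careful to nail down.
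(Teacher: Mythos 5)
Your proposal is correct and follows essentially the same route as the paper's proof: both rest on the observation that a server with full capacity $C$ survives through the first $m$ iterations, so the full branch (line 6) is taken, forcing $c_i = \hat{c}_i$, and the argmax in line 6 then dominates the super-optimal utility of every still-unassigned unfull thread. Your write-up is in fact slightly more careful than the paper's, which argues the utility bound by contradiction and leaves implicit the fact you verify explicitly --- that every thread of $E$ is still in $R$ and eligible (paired with a full-capacity server in $U$) at the iteration in which $t_i$ is chosen.
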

\vspace{-0.1cm}
\begin{proof}
	To show $i \in D$, note that the $m$ servers all had $C$ resource at the start of Algorithm 1, and fewer than $m$ threads were assigned before $t_i$.  So when $t_i$ was assigned, there was at least one server with $C$ resource.  Then $t_i$ can obtain $\hat{c}_i$ resource on one of these servers, and so $i \in D$.
	
To show  $g_i(c_i) \geq \gamma$, suppose the opposite, and let $j \in E$ be such that $g_j(\hat{c}_j) = \gamma$.  Since $\hat{c}_j \leq C$, and since in $t_i$'s iteration there was some server with $C$ resource, then in that iteration Algorithm 1 would have obtained greater utility by assigning $t_j$ instead of $t_i$, which is a contradiction. Thus, $g_i(c_i) \geq \gamma$.
\end{proof}
\vspace{-0.1cm}
Lemma \ref{lem-Dthruput} implies there are at least $m$ threads in $D$, and so we have the following.
\vspace{-0.1cm}
\begin{corollary}
	\label{cor-Dthruput}
	$\sum_{i \in D} g_i(c_i) \geq m \gamma$.
\end{corollary}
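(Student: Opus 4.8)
The plan is to read this off almost directly from Lemma \ref{lem-Dthruput}, since the corollary is essentially a packaging of that lemma together with the nonnegativity of the $g_i$. The first step is to observe that Algorithm 1 terminates only when every thread has been assigned, so in the regime $n \geq m$ the phrase ``the first $m$ threads assigned by Algorithm 1'' is well-defined. Applying Lemma \ref{lem-Dthruput} to each of these $m$ threads shows that each one belongs to $D$ and satisfies $g_i(c_i) \geq \gamma$. In particular this already records the fact stated in the text preceding the corollary, namely that $|D| \geq m$.

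Next I would bound $\sum_{i \in D} g_i(c_i)$ from below by discarding all but the first $m$ assigned threads. Writing $M \subseteq D$ for the set of the first $m$ assigned threads, I would split the sum as $\sum_{i \in D} g_i(c_i) = \sum_{i \in M} g_i(c_i) + \sum_{i \in D \setminus M} g_i(c_i)$. The first sum is at least $m\gamma$ because $|M| = m$ and each of its terms is at least $\gamma$ by Lemma \ref{lem-Dthruput}. For the second sum I would invoke the fact that every $g_i$ is nonnegative: from Equation \ref{eqn-g}, $g_i$ is either $f_i(\hat c_i)\,x/\hat c_i$ or the constant $f_i(\hat c_i)$, both nonnegative since $f_i$ is a nonnegative utility function. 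Hence $\sum_{i \in D \setminus M} g_i(c_i) \geq 0$, and combining the two pieces yields $\sum_{i \in D} g_i(c_i) \geq m\gamma$.

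There is no genuine obstacle here; the entire content has been front-loaded into Lemma \ref{lem-Dthruput}, and the corollary is a one-line consequence. The only point that warrants a remark is the degenerate case $n < m$, where $E = \emptyset$ and $\gamma = \max_{i \in E} g_i(\hat c_i)$ is a maximum over the empty set. In that regime every thread receives its own server at full resources and is therefore full, so one may take $\gamma = 0$; the inequality then holds trivially because its left-hand side is a sum of nonnegative terms. Thus the argument reduces, in all cases, to the selection of $m$ full threads of utility at least $\gamma$ together with nonnegativity of the remaining terms.
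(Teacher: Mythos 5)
Your proof is correct and is exactly the paper's argument: the paper derives the corollary in one line from Lemma \ref{lem-Dthruput}, noting that it guarantees at least $m$ threads in $D$ each with $g_i(c_i) \geq \gamma$, with nonnegativity of the remaining terms doing the rest. Your additional remarks (spelling out the nonnegativity of $g_i$ and the degenerate case $n < m$) are fine but do not change the substance.
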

The next lemma shows that for the threads in $E$, threads with higher slopes in the nonconstant portion of their utility functions are allocated more resources. 
\vspace{-0.1cm} 
\begin{lemma}
	\label{lem-slope-alloc}
	For any two threads $i,j\in E$, if $\frac{g_i(\hat{c}_i)}{\hat{c}_i} > \frac{g_j(\hat{c}_j)}{\hat{c}_j}$, then $c_i \geq c_j$.
\end{lemma}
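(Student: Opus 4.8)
The plan is to exploit the structure of the iterations in which unfull threads are assigned. First I would observe that the set $U$ computed in line 4 is monotonically non-increasing across iterations: as the algorithm proceeds, $R$ only loses elements and each $C_j$ only decreases, so the condition $\hat{c}_k \leq C_j$ can only become harder to satisfy. Consequently, once $U$ becomes empty the algorithm never returns to the if-branch, so every thread in $E$ is assigned during a single contiguous final phase of else-branch iterations, with no full thread assigned after the first unfull one. Combined with Lemma \ref{lem-alg-oneEperserver}, this implies that the servers eventually receiving the two threads $i, j \in E$ are distinct, and that each such server is untouched — hence its remaining resource is frozen — from the start of this final phase until its unfull thread consumes all of it.

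The second ingredient is linearity. Throughout this final phase $U = \emptyset$, so for every still-unassigned thread $k$ and every server $s_l$ we have $C_l < \hat{c}_k$; thus $g_k(C_l) = \frac{g_k(\hat{c}_k)}{\hat{c}_k} C_l = \sigma_k C_l$, where I write $\sigma_k = g_k(\hat{c}_k)/\hat{c}_k$ for the slope of the linear part of $g_k$. In particular the greedy choice in line 9 reduces to selecting the pair $(k,l)$ maximizing the product $\sigma_k C_l$, and $c_i < \hat{c}_i$, $c_j < \hat{c}_j$ since $i,j \in E$.

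The heart of the argument is to examine the single iteration $M$ in which the earlier-assigned of $i$ and $j$ is placed. At the start of $M$ both $i$ and $j$ are still in $R$, and by the freezing observation both of their eventual servers are available, carrying their final resources $c_i$ and $c_j$ respectively. Assume $c_j > 0$ (otherwise $c_i \geq 0 = c_j$ is immediate). If $j$ were the thread assigned in iteration $M$, then the argmax of line 9 would in particular prefer the pair $(j, s_{l_j})$ to $(i, s_{l_j})$ — the same server carrying the same resource $c_j$, with $c_j < \hat{c}_i$ so that $g_i(c_j) = \sigma_i c_j$ — giving $\sigma_j c_j = g_j(c_j) \geq g_i(c_j) = \sigma_i c_j$ and hence $\sigma_j \geq \sigma_i$, contradicting $\sigma_i > \sigma_j$. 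Therefore $i$ is assigned in iteration $M$, i.e.\ before $j$; now the argmax prefers $(i, s_{l_i})$ to $(i, s_{l_j})$, yielding $\sigma_i c_i = g_i(c_i) \geq g_i(c_j) = \sigma_i c_j$, and since $\sigma_i > 0$ this gives $c_i \geq c_j$, as required.

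I expect the main obstacle to be the bookkeeping that guarantees both servers carry exactly $c_i$ and $c_j$ at iteration $M$ — that is, justifying the monotonicity of $U$ and the resulting freezing of each server's resource, so that a comparison made at one common iteration is legitimate even though the two threads may be assigned several steps apart. Once that is in place, the slope comparison itself is immediate, because $U = \emptyset$ forces every relevant utility value onto its linear segment and the greedy preference then compares the two slopes directly on a shared server.
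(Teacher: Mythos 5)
Your proof is correct, and it reaches the lemma by a somewhat different route than the paper. Both arguments ultimately rest on the same greedy-exchange idea: examine the iteration in which the earlier of $t_i, t_j$ is assigned, and play the argmax of line 9 against a competing (thread, server) pair on a common server, using linearity of $g$ below the super-optimal point (this step in your "$t_j$ first" case is essentially identical to the paper's second case). The difference is the scaffolding. The paper argues by contradiction (assume $c_i < c_j$) with two cases on the assignment order, and it needs only that server resources are non-increasing over time --- so at the earlier iteration some server still holds at least $c_j$ --- plus a separate argument in each case that $\hat{c}_i > c_j$ (in the second case via the algorithm's preference for the if-branch). You instead invest in structure up front: the condition $U = \emptyset$ is absorbing, so the threads of $E$ occupy a contiguous terminal phase of else-branch iterations, during which, by Lemma \ref{lem-alg-oneEperserver}, each pending $E$-thread's eventual server sits frozen at exactly its final allocation. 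This machinery, which you justify correctly, buys you three things: the inequality $\hat{c}_i > C_l$ for every server $s_l$ is immediate from $U = \emptyset$ rather than argued case by case; the main case ``$t_i$ assigned first'' becomes a direct comparison of the two frozen servers, $g_i(c_i) \geq g_i(c_j)$ hence $c_i \geq c_j$, instead of a proof by contradiction; and you handle the degenerate case $c_j = 0$ explicitly, which the paper's strict-inequality step $g_i(c_j) > g_j(c_j)$ silently assumes away. The cost is the extra setup; the paper's proof is leaner because resource monotonicity alone suffices for its contradictions, but your version localizes all the bookkeeping into one cleanly stated freezing claim.
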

\vspace{-0.1cm}
\begin{proof}
Suppose for contradiction $c_i < c_j$, and suppose first that $t_i$ was assigned before $t_j$.  Then when $t_i$ was assigned, there was at least one server with  $c_j$ or more remaining resources.  We have $\hat{c}_i > c_j$, since otherwise $t_i$ can be allocated $\hat{c}_i$ resources, so that $i \not \in E$.  Now, since $\hat{c}_i > c_j > c_i$, then $t_i$ could obtain greater utility by being allocated $c_j$ instead of $c_i$ amount of resources. This is a contradiction.
	
Next, suppose $t_j$ was assigned before $t_i$.  Then when $t_j$ was assigned, there was a server with at least $c_j$ amount of resources.  Again, we have $\hat{c}_i > c_j$.  Indeed, otherwise we have $\hat{c}_i \leq c_j$, and $\hat{c}_j > c_j$ since $j \in E$, and so $t_i$ can be allocated its super-optimal allocation while $t_j$ cannot.  But Algorithm 1 prefers in line 4 to assign threads that can receive their super-optimal allocations, and so it would assign $t_i$ before $t_j$, a contradiction.  Thus, $\hat{c}_i > c_j$.  However, this means that in the iteration in which $t_j$ was assigned, $t_i$ can obtain greater utility than $t_j$, since $g_i(c_j) =  c_j \frac{g_i(\hat{c}_i)}{\hat{c}_i} > c_j \frac{g_j(\hat{c}_j)}{\hat{c}_j}=g_j(c_j)$, where the first equality follows because $\hat{c}_i > c_j$, the inequality follows because  $\frac{g_i(\hat{c}_i)}{\hat{c}_i} > \frac{g_j(\hat{c}_j)}{\hat{c}_j}$, and the second equality follows because $\hat{c}_j > c_j$.  Thus, $t_i$ would be assigned before $t_j$, a contradiction.  The lemma thus follows.
\end{proof}
The following facts are used in later parts of the proof.  Facts \ref{fact-abc} and \ref{fact-frac} follow by simple manipulation, while Fact \ref{fact-cs} follows from the Cauchy-Schwarz inequality, and Fact \ref{fact-cheby} is Chebyshev's sum inequality.

\vspace{-0.1cm}
\begin{fact} 
\label{fact-abc}
Given $a, a', b, c > 0$ and $a\geq a'$, $b\leq c$, we have $\frac{a+b}{a+c}\geq\frac{a'+b}{a'+c}$.
\end{fact}

\begin{fact}\label{fact-frac}
Given $a,a',b,b' > 0$, if $\frac{a}{a'}\leq\frac{b}{b'}$, then $\frac{a}{a'}\leq\frac{a+b}{a'+b'}\leq\frac{b}{b'}$.
\end{fact}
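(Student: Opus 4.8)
The plan is to reduce all three inequalities to a single cross-multiplied form of the hypothesis. Since $a, a', b, b'$ are all strictly positive, multiplying an inequality between fractions by the (positive) product of the denominators preserves its direction. Hence the hypothesis $\frac{a}{a'} < \frac{b}{b'}$ is equivalent to $a b' < a' b$, and I would first record this equivalent form, as it drives the rest of the argument.

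For the left inequality $\frac{a}{a'} < \frac{a+b}{a'+b'}$, I would clear denominators by multiplying through by $a'(a'+b') > 0$, reducing the claim to $a(a'+b') < (a+b)a'$. Expanding both sides and cancelling the common term $a a'$ leaves exactly $a b' < b a'$, which is the hypothesis. For the right inequality $\frac{a+b}{a'+b'} < \frac{b}{b'}$, I would similarly multiply by $b'(a'+b') > 0$ to obtain the equivalent claim $(a+b)b' < b(a'+b')$; expanding and cancelling $b b'$ again yields $a b' < b a'$. Thus both strict inequalities hold precisely because the hypothesis does, and chaining them gives the stated double inequality.

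There is no genuine obstacle here: the statement is the classical mediant inequality, and its entire content is the observation that the mediant $\frac{a+b}{a'+b'}$ lies strictly between the two ratios whenever they are unequal. The only point requiring care is the repeated appeal to positivity to justify that cross-multiplication does not flip the inequality signs, so I would keep the positivity hypotheses explicit at each step rather than leave them implicit.
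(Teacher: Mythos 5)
Your proof is correct. The paper itself states this result as a Fact with no proof at all (it is the classical mediant inequality, treated as known), so there is nothing to compare against; your cross-multiplication argument, reducing both strict inequalities to the single equivalent form $ab' < a'b$ and invoking positivity to justify clearing denominators, is the standard complete justification and fills the omission correctly.
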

%

\begin{fact}\label{fact-cs}
Given $a_1,\ldots, a_n>0$, we have $(\sum_{i=1}^{n} a_i) (\sum_{i=1}^{n} \frac{1}{a_i}) \geq n^2$.
\end{fact}

\begin{fact}\label{fact-cheby}
Given $a_1 \geq a_2 \geq \ldots \geq a_n$ and $b_1 \geq b_2 \geq \ldots \geq b_n$, we have $\sum_{i=1}^n a_i b_i \geq (\frac{1}{n} \sum_{i=1}^n a_i) (\sum_{i=1}^n b_i)$.	
\end{fact}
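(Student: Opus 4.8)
The plan is to prove Chebyshev's sum inequality by the symmetric double-sum trick, which converts the hypothesis that both sequences are sorted in the same direction into a single nonnegativity statement. The key observation I would establish first is that for every pair of indices $i,j$, the factors $a_i - a_j$ and $b_i - b_j$ share the same sign: when $i \leq j$ both are nonnegative because the sequences are nonincreasing, and when $i \geq j$ both are nonpositive. Hence $(a_i - a_j)(b_i - b_j) \geq 0$ for all $i,j$, and summing over all ordered pairs in $\{1,\ldots,n\}^2$ gives $\sum_{i,j}(a_i - a_j)(b_i - b_j) \geq 0$.

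I would then expand this double sum and read off the inequality. Multiplying out produces four terms; the two diagonal contributions each collapse to $n\sum_k a_k b_k$ after summing over the free index, while the two cross terms each factor into $(\sum_k a_k)(\sum_k b_k)$. Collecting these shows the double sum equals $2n\sum_k a_k b_k - 2(\sum_k a_k)(\sum_k b_k)$, so the nonnegativity from the previous step immediately yields $n\sum_k a_k b_k \geq (\sum_k a_k)(\sum_k b_k)$; dividing by $n$ gives the stated bound $\sum_{i=1}^n a_i b_i \geq (\frac{1}{n}\sum_{i=1}^n a_i)(\sum_{i=1}^n b_i)$.

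Since this is a classical inequality, I do not expect a real obstacle. The only two points needing care are the sign argument, which depends essentially on both sequences being ordered the \emph{same} way (so I would spell it out rather than merely invoke ``similar ordering''), and the bookkeeping in the expansion, where one must track which of the four terms pick up a factor of $n$ and which factor into a product of partial sums. A workable alternative would be induction on $n$ together with Abel summation, but the double-sum argument is shorter and sidesteps any summation-by-parts identity, so I would prefer it.
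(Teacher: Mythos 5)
Your proof is correct: the sign analysis of $(a_i-a_j)(b_i-b_j)\geq 0$ holds for every pair $i,j$ precisely because both sequences are nonincreasing, and the expansion of the double sum into $2n\sum_k a_k b_k - 2\left(\sum_k a_k\right)\left(\sum_k b_k\right)$ is the right bookkeeping, giving the stated bound after dividing by $2n$. The paper offers no proof of this statement at all — it simply cites it by name as Chebyshev's sum inequality — so your double-sum argument is exactly the standard derivation one would supply here, and it has the added merit of requiring no positivity assumptions on the $a_i$ or $b_i$.
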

We now state a lower bound on a certain function that will be used in later parts of the proof. 
\vspace{-0.1cm}
\begin{lemma}\label{lem-beta}
	Let $A,d > 0$, and $0 < a_1 \leq a_2 \ldots \leq a_n$. Also, let $\beta=\frac{A+\sum_{i=1}^{n}a_iz_i}{A+\sum_{i=1}^{n}z_i}$, where each $z_i \in [0,d]$. Then 
	\begin{equation*}
	\beta \geq \min_{j = 1, \ldots, n} \left(\frac{A+\sum_{i=1}^{j}a_id}{A+j d}, 1 \right)
	\end{equation*}
\end{lemma}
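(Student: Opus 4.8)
The plan is to reduce the claimed bound to a statement about where an affine function attains its minimum over a box. Write $B$ for the right-hand side, that is, $B = \min_{j=1,\ldots,n}\left(\frac{A+\sum_{i=1}^{j}a_i d}{A+jd},\,1\right)$. Since $A,d>0$ and each $z_i \ge 0$, the denominator $A+\sum_{i=1}^{n} z_i$ is strictly positive, so the inequality $\beta \ge B$ is equivalent to $h(z) \ge 0$, where I define $h(z) = A + \sum_{i=1}^{n} a_i z_i - B\bigl(A + \sum_{i=1}^{n} z_i\bigr) = A(1-B) + \sum_{i=1}^{n} (a_i - B)z_i$. The key observation is that $h$ is affine in $z=(z_1,\ldots,z_n)$, so its minimum over the box $[0,d]^n$ is attained at a vertex and can be located coordinate by coordinate: the coefficient of $z_i$ is $a_i - B$, so a minimizing vertex takes $z_i = d$ whenever $a_i < B$ and $z_i = 0$ whenever $a_i \ge B$.

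Here is where the hypothesis $a_1 \le a_2 \le \cdots \le a_n$ does the work. Because the $a_i$ are sorted, the set $\{i : a_i < B\}$ is a prefix $\{1,\ldots,k^*\}$ for some $0 \le k^* \le n$. Hence a minimizing vertex is $z_i = d$ for $i \le k^*$ and $z_i = 0$ for $i > k^*$, and the minimum value is $h_{\min} = A(1-B) + d\sum_{i=1}^{k^*}(a_i - B)$. It then remains only to check that $h_{\min} \ge 0$ in the two cases $k^*=0$ and $k^*\ge 1$.

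When $k^*=0$ the sum is empty and $h_{\min} = A(1-B)$, which is nonnegative because $1$ is one of the terms in the minimum defining $B$, so $B \le 1$. When $k^*\ge 1$, rearranging gives $h_{\min} = \bigl(A + d\sum_{i=1}^{k^*} a_i\bigr) - B\,(A + dk^*)$; since the fraction $\frac{A+\sum_{i=1}^{k^*}a_i d}{A+k^* d}$ is precisely the $j=k^*$ term in the minimum defining $B$, we have $B \le \frac{A+\sum_{i=1}^{k^*}a_i d}{A+k^* d}$, which upon clearing the positive denominator $A+k^*d$ yields exactly $h_{\min}\ge 0$. Thus $h(z) \ge h_{\min}\ge 0$ for every $z \in [0,d]^n$, establishing $\beta \ge B$.

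The only genuine obstacle is the first step, namely justifying that it suffices to test the inequality at the vertices of the box rather than over the whole continuum; but this becomes immediate once $\beta \ge B$ is rewritten as nonnegativity of the affine function $h$, after which everything is linear. I expect the sortedness of the $a_i$ to be essential precisely because it guarantees the minimizing vertex has the prefix form $\{1,\ldots,k^*\}$, so that the minimum lines up with one of the candidate terms in $B$; without ordering, the minimizing index set could be an arbitrary subset and would not match the stated bound. None of the three preceding Facts seems necessary for this argument, which rests only on the affineness of $h$ and the monotone ordering of the $a_i$.
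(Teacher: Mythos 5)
Your proof is correct, and it takes a genuinely different route from the paper's. The paper works with the fraction $\beta$ directly: after disposing of the case $a_1 \geq 1$ via Fact \ref{fact-frac}, it differentiates $\beta$ with respect to $z_1$, observes the derivative is negative when $a_1 < 1$, pushes $z_1$ to $d$, and then repeats coordinate by coordinate, at each step either stopping via the mediant inequality (Fact \ref{fact-frac}) when the running fraction $(A + \sum_{i \leq j} a_i d)/(A + jd)$ drops below $a_{j+1}$, or absorbing another $z_j = d$ --- a recursion it summarizes with ``continuing this way.'' You instead clear the positive denominator so the claim becomes nonnegativity of the affine function $h(z) = A(1-B) + \sum_{i}(a_i - B)z_i$ over the box $[0,d]^n$, note that an affine function is minimized at a vertex chosen coordinatewise by the sign of $a_i - B$, use the sortedness of the $a_i$ to conclude that the minimizing vertex is the prefix vertex $z_1 = \cdots = z_{k^*} = d$, $z_{k^*+1} = \cdots = z_n = 0$, and verify nonnegativity there against the $j = k^*$ term of the minimum (or against the term $1$ when $k^* = 0$). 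Both proofs ultimately rest on the same geometric fact --- the worst case is a prefix vertex of the box --- but your linearization eliminates the calculus and the informal iterative step, identifies the tight index $k^*$ in one shot, needs none of the paper's Facts, and makes explicit exactly where sortedness enters (it forces the minimizing set $\{i : a_i < B\}$ to be a prefix, so that the vertex value matches one of the candidate terms in the bound). The paper's sequential argument, in exchange, produces an explicit chain of intermediate lower bounds on $\beta$ itself, but it would require an induction to be made fully rigorous, whereas yours is complete as stated.
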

\begin{proof}
If $a_1 \geq 1$, then Fact \ref{fact-frac} implies that $\beta \geq 1$, and the lemma holds. 	Otherwise, suppose $a_1 < 1$.  Then differentiating $\beta$ with respect to $z_1$, we get
	\begin{equation*}
	\beta'(z_1) = \frac{(a_1-1) A + \sum_{i=2}^{n} (a_1-a_i)z_i} {(A + \sum_{i=1}^{n}z_i)^2}
	\end{equation*}
	Since $a_1\leq a_2\ldots\leq a_n$ and $a_1 < 1$, we have $\beta'(z_1)<0$. Thus, $\beta(z_1)$ is minimized for $z_1 = d$, and we have $$\beta \geq \frac{A+a_1d+\sum_{i=2}^{n}a_iz_i}{A+d+\sum_{i=2}^{j}z_i}$$ 	
	To simplify this expression, suppose first that $\frac{A+a_1d}{A+d} \leq a_2$. Then we have
	$$\beta \geq \frac{A+a_1d+\sum_{i=2}^{n}a_iz_i}{A+d+\sum_{i=2}^{n}z_i} \geq \frac{A+a_1 d}{A+d}.$$
	The second inequality follows because $a_2 \leq \ldots \leq a_n$ and by Fact \ref{fact-frac}.  Thus, the lemma is proved. Otherwise,  $(A+a_1d)/(A+d) > a_2$, and so $$\frac{A+a_1d+\sum_{i=2}^{n}a_iz_i}{A+d+\sum_{i=2}^{n}z_i} \geq \frac{A+a_1d+a_2d+\sum_{i=3}^{n}a_iz_i}{A+2d+\sum_{i=3}^{n}z_i}$$	
	We can simplify the latter expression in a way similar to above, based on whether $(A+a_1 d + a_2 d)/(A+2d) \leq a_3$.  Continuing this way, if we stop at the $j$'th step, then $\beta \geq (A + \sum_{i=1}^j a_i d)/(A+jd)$.  Otherwise, after the $n$'th step, we have $\beta \geq (A + \sum_{i=1}^n a_i d)/(A+nd)$. In either case, the lemma holds.
\end{proof}

Since for $i\in E$, $\hat{c}_i > c_i$ and by definition of $g$, we have $g(c_i) = \frac{g_i(\hat{c}_i)}{\hat{c}_i} c_i$ for $i\in E$. Algorithm 1 produces an allocation $c_1, \ldots, c_n$ with total utility $G = \sum_{i\in D} g_i(\hat{c}_i) + \sum_{i\in E} \frac{g_i(\hat{c}_i)}{\hat{c}_i}c_i$.  We now prove that this allocation is an $\alpha$ approximation to the super-optimal utility 
$\hat{F} = \sum_{i \in T} f_i(\hat{c}_i)$. 
\begin{lemma}
	\label{lem-alpha-bound}
	$G \geq \alpha \hat{F}$, where $\alpha = 2(\sqrt{2}-1) > 0.828$.
\end{lemma}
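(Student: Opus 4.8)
The plan is to reduce the claim to the unfull threads, cast the resulting ratio into the form handled by Lemma~\ref{lem-beta}, and close with a one-variable minimization that produces the constant $2(\sqrt{2}-1)$.

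First I would split $T = D \cup E$. Since every full thread $i\in D$ has $c_i=\hat c_i$ and hence contributes the identical amount $g_i(\hat c_i)$ to both $G$ and $\hat F$, writing $G_D=\sum_{i\in D}g_i(\hat c_i)$ and introducing the slopes $s_i = g_i(\hat c_i)/\hat c_i$ and efficiencies $a_i = c_i/\hat c_i\in[0,1)$ for $i\in E$, we have $G = G_D+\sum_{i\in E}s_ic_i$ and $\hat F = G_D+\sum_{i\in E}s_i\hat c_i$. Because $\sum_{i\in E}s_ic_i\le\sum_{i\in E}s_i\hat c_i$, the map $x\mapsto \frac{x+\sum_E s_ic_i}{x+\sum_E s_i\hat c_i}$ is nondecreasing, so replacing $G_D$ by the smaller quantity $m\gamma$ from Corollary~\ref{cor-Dthruput} only decreases the ratio:
\begin{equation*}
\frac{G}{\hat F}\ \ge\ \frac{m\gamma+\sum_{i\in E}s_ic_i}{m\gamma+\sum_{i\in E}s_i\hat c_i}.
\end{equation*}

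Next I would put this into the shape of Lemma~\ref{lem-beta}. Dividing numerator and denominator by $\gamma=\max_{i\in E}s_i\hat c_i$ and setting $z_i = s_i\hat c_i/\gamma\in[0,1]$ (so $d=1$ and $A=m$), the right-hand side equals $\frac{m+\sum_{i\in E}a_iz_i}{m+\sum_{i\in E}z_i}$ after relabeling the threads so that the efficiencies $a_i$ are nondecreasing. Lemma~\ref{lem-beta} then yields
\begin{equation*}
\frac{G}{\hat F}\ \ge\ \min_{1\le j\le|E|}\frac{m+\sum_{i=1}^{j}a_i}{m+j},
\end{equation*}
where the alternative value $1$ in Lemma~\ref{lem-beta} never binds because each $a_i<1$.

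The heart of the argument, and the step I expect to be the main obstacle, is a lower bound on the partial sums of the sorted efficiencies, of the form $\sum_{i=1}^{j}a_i\ge j^2/m$. This is where Lemma~\ref{lem-alloc-wrong} (which controls the $\hat c$-weighted average of the $a_i$), Lemma~\ref{lem-slope-alloc} (which ties the ordering of the $c_i$ to that of the slopes), and the Cauchy--Schwarz inequality Fact~\ref{fact-cs} must be combined; the difficulty is that the efficiency ordering demanded by Lemma~\ref{lem-beta} need not agree with either the $\hat c$-weighting of Lemma~\ref{lem-alloc-wrong} or the slope ordering of Lemma~\ref{lem-slope-alloc}, so one has to argue that the balanced configuration (all unfull threads identical, each receiving exactly the $\frac{|E|}{m}$-fraction forced by Lemma~\ref{lem-alloc-wrong}) is the worst case. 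Granting the partial-sum bound, each term satisfies
\begin{equation*}
\frac{m+\sum_{i=1}^{j}a_i}{m+j}\ \ge\ \frac{m+j^2/m}{m+j}=\frac{m^2+j^2}{m(m+j)}=\frac{1+t^2}{1+t},\qquad t=\frac{j}{m}\in(0,1),
\end{equation*}
using $j\le|E|\le m-1$. Finally I would minimize $\varphi(t)=\frac{1+t^2}{1+t}$ over $(0,1)$: since $\varphi'(t)=\frac{t^2+2t-1}{(1+t)^2}$ vanishes at $t=\sqrt{2}-1$ and $\varphi(\sqrt{2}-1)=2(\sqrt{2}-1)=\alpha$, we obtain $\frac{G}{\hat F}\ge\alpha$. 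Because $t=j/m$ is rational it never equals $\sqrt{2}-1$ exactly, so the inequality is strict for any concrete instance and the constant $\alpha$ is approached only in the limit, consistent with $\alpha$ being the stated ratio.
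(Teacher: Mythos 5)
Your overall architecture matches the paper's: replace the $D$-contribution by $m\gamma$ via Corollary \ref{cor-Dthruput}, feed the resulting ratio into Lemma \ref{lem-beta}, reduce to $\min_j \frac{m+j^2/m}{m+j}$, and minimize to get $\alpha = 2(\sqrt{2}-1)$. But the step you ``grant'' --- the partial-sum bound $\sum_{i=1}^{j} a_i \geq j^2/m$ for the efficiencies $a_i = c_i/\hat{c}_i$ sorted in nondecreasing order --- is precisely the heart of the lemma, and it does not follow from the ingredients you cite. Lemma \ref{lem-alloc-wrong} is an aggregate, $\hat{c}$-weighted statement, and Lemma \ref{lem-slope-alloc} only correlates slopes with allocations; together they are consistent with data violating your bound. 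For instance, with $|E| = 2$ and $m \geq 3$, take $\hat{c}_2 \gg \hat{c}_1$, $c_1$ arbitrarily close to $0$, $c_2 = \frac{2}{m}(\hat{c}_1 + \hat{c}_2) - c_1$, and slopes satisfying $g_1(\hat{c}_1)/\hat{c}_1 \leq g_2(\hat{c}_2)/\hat{c}_2$: then Lemma \ref{lem-alloc-wrong} holds with equality and Lemma \ref{lem-slope-alloc} is respected, yet $a_1 + a_2 \approx 2/m < 4/m$ (and already $a_1 < 1/m$ at $j=1$). Ruling such configurations out requires algorithm-specific facts stated in no lemma (e.g., that an unfull thread receives the maximum remaining capacity, hence $c_i \geq \hat{c}_i/m$, and that every server's final leftover is at most $\min_{i \in E} c_i$), so your assertion that ``the balanced configuration is the worst case'' is a genuine missing proof, not a routine verification.

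The paper closes exactly this hole by instantiating Lemma \ref{lem-beta} differently, and in a different order. It first uses Lemma \ref{lem-slope-alloc} to justify Chebyshev's inequality (Fact \ref{fact-cheby}), replacing $\sum_{i\in E} \frac{g_i(\hat{c}_i)}{\hat{c}_i} c_i$ by $\left(\sum_{i\in E} c_i/|E|\right) \sum_{i\in E} \frac{g_i(\hat{c}_i)}{\hat{c}_i}$, and then applies Lemma \ref{lem-alloc-wrong} to replace $\sum_{i\in E} c_i/|E|$ by $\sum_{i\in E} \hat{c}_i/m$. Only then does it invoke Lemma \ref{lem-beta}, with $z_i = g_i(\hat{c}_i) \in [0,\gamma]$ and $a_i = \sum_{k\in E} \hat{c}_k/(m \hat{c}_i)$, the threads sorted by nonincreasing $\hat{c}_i$. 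Because these $a_i$ depend only on the $\hat{c}_i$ --- not on the algorithm's allocations $c_i$ --- the required partial-sum bound is immediate from Cauchy--Schwarz (Fact \ref{fact-cs}):
\begin{equation*}
\sum_{i \in E_j} a_i \;\geq\; \frac{1}{m}\left(\sum_{i \in E_j} \hat{c}_i\right)\left(\sum_{i \in E_j} \frac{1}{\hat{c}_i}\right) \;\geq\; \frac{j^2}{m}.
\end{equation*}
In short: perform the Chebyshev and Lemma \ref{lem-alloc-wrong} substitutions \emph{before} applying Lemma \ref{lem-beta}, so that the coefficients entering Lemma \ref{lem-beta} are $\hat{c}$-only quantities; then the step you could not prove becomes a one-line Cauchy--Schwarz application, and the remainder of your argument (including the final minimization of $(1+t^2)/(1+t)$) goes through unchanged.
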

\begin{proof}
	We have $\hat{F} = \sum_{i \in T} f_i(\hat{c}_i) = \sum_{i \in T} g_i(\hat{c}_i)$ by the definition of the $g_i$. Thus, 
	\begin{eqnarray*}\small
	\frac{G}{\hat{F}} & = & \frac{\sum_{i\in D} g_i(\hat{c}_i)+\sum_{i\in E} \frac{g_i(\hat{c}_i)}{\hat{c}_i} c_i}{\sum_{i\in D} g_i(\hat{c}_i)+ \sum_{i\in E} g_i(\hat{c}_i)} \\
	& \geq & \frac{m\gamma + \sum_{i\in E} \frac{g_i(\hat{c}_i)}{\hat{c}_i} c_i}{m\gamma+\sum_{i\in E} g_i(\hat{c}_i)} \\
	& \geq & \frac{m\gamma+\left(\sum_{i\in E} c_i/|E|\right)\sum_{i \in E} \frac{g_i(\hat{c}_i)}{\hat{c}_i}}{m\gamma+\sum_{i\in E }g_i(\hat{c}_i)} \\
	& \geq & \frac{m\gamma+\left(\sum_{j\in E} \hat{c}_j/m\right)\sum_{i \in E}\frac{g_i(\hat{c}_i)}{\hat{c}_i}}{m\gamma + \sum_{i \in E} g_i(\hat{c}_i)}
	\end{eqnarray*}
	Recall that $\gamma = \max_{i \in E} g_i(\hat{c}_i)$. To obtain the first inequality, we have $\sum_{i\in D} g_i(\hat{c}_i)\geq m\gamma$ by Corollary \ref{cor-Dthruput} and the fact that $\hat{c}_i=c_i$ for all $i\in D$. Also, we have $\sum_{i\in E} \frac{g_i(\hat{c}_i)}{\hat{c}_i} c_i\leq \sum_{i\in E} g_i(\hat{c}_i)$, since $\hat{c}_i \geq c_i$ for all $i\in E$. Thus, we can apply Fact \ref{fact-abc} to the first and second expressions above, letting $\sum_{i\in D} g_i(\hat{c}_i)$ play the role of $a$, $\sum_{i\in E} \frac{g_i(\hat{c}_i)}{\hat{c}_i} c_i$ play the role of $b$, $\sum_{i\in E} g_i(\hat{c}_i)$ play the role of $c$, $m\gamma$ play the role of $a'$. The second inequality follows because by Lemma \ref{lem-slope-alloc}, threads $i \in E$ with larger values of $\frac{g_i(\hat{c}_i)}{\hat{c}_i}$ also have larger values of $c_i$.  Thus, we can apply Fact \ref{fact-cheby} to bring the term $\sum_{i\in E} c_i/|E|$ outside the sum $\sum_{i\in E} \frac{g_i(\hat{c}_i)}{\hat{c}_i} c_i$.  The last inequality follows because of Lemma \ref{lem-alloc-wrong}. Now, assume WLOG that the elements in $E$ are ordered by nonincreasing value of $\hat{c}_i$, so that $\frac{1}{\hat{c}_1} \leq \frac{1}{\hat{c}_2} \leq \ldots \leq \frac{1}{\hat{c}_{|E|}}$.  Let $E_i$ denote the first $i$ elements of $E$ in this order.  For any $i \in E$, we have $g_i(\hat{c}_i) \in [0, \gamma]$.  Thus, applying Lemma \ref{lem-beta} to the last expression above, letting $g_i(\hat{c}_i)$ play the role of $z_i$ and $\frac{\sum_{j\in E} \hat{c}_j}{m \hat{c}_i}$ play the role of $a_i$, and noting $\frac{G}{\hat{F}} \leq 1$, we have
	
	\begin{eqnarray*}\small
	\frac{G}{\hat{F}} & \geq & \min_{i = 1, \ldots, |E|}  \left(\frac{m\gamma + \left(\sum_{j \in E} \hat{c}_j / m \right) \sum_{j \in E_i} \frac{\gamma}{\hat{c}_j}}{m\gamma + i \gamma}	\right)\\
	& \geq & \min_{i = 1, \ldots, |E|}  \left(\frac{m + \frac{1}{m} \left(\sum_{j \in E_i} \hat{c}_j \right) \left( \sum_{j \in E_i} \frac{1}{\hat{c}_j}\right)}{m + i}	\right) \\
	& \geq & \min_{i = 1, \ldots, |E|} \left(\frac{m+\frac{i^2}{m}}{m+i}\right)\\	
	\end{eqnarray*}
	The second inequality follows by simplification and because $\sum_{j\in E} \hat{c}_j \geq \sum_{j\in E_i} \hat{c}_j$ for any $i$. The last inequality follows by Fact \ref{fact-cs}. It remains to lower bound the final expression. Recall $|E| \leq m-1$ by Lemma \ref{lem-numEthread}. Treating $i$ as a real value and taking the derivative with respect to $i$, we find the minimum value is obtained at $i = (\sqrt{2}-1) m$, for which $\frac{G}{\hat{F}} \geq 2(\sqrt{2}-1) = \alpha$.  Thus, the lemma is proved.
\end{proof}
\vspace{-0.3cm}
\subsection{Solving the concave problem}
\vspace{-0.1cm}
To solve the original AA problem with concave utility functions $f_1, \ldots, f_n$, we run Algorithm 1 on the linearized problem to obtain an allocation $c_1, \ldots, c_n$, then simply output this as the solution to the concave problem.  The total utility of this solution is $F = \sum_{i \in T} f_i(c_i)$.  We now show this is an $\alpha$ approximation to the optimal utility $F^*$.  
\begin{theorem}
	\label{thm-alg1-ratio}
	$F \geq \alpha F^*$, and Algorithm 1 achieves an $\alpha$ approximation ratio.
\end{theorem}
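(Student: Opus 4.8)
The plan is to assemble the three facts already proved in this section into a short chain of inequalities. The solution we output for the concave problem is the very allocation $c_1, \ldots, c_n$ computed by Algorithm 1 on the linearized problem $B$, and its concave utility is $F = \sum_{i \in T} f_i(c_i)$. I would argue $F \geq \sum_{i \in T} g_i(c_i) = G \geq \alpha \hat{F} \geq \alpha F^*$, where the first step is the pointwise domination $f_i \geq g_i$ from Lemma~\ref{lem-alg-fvsg}, the middle step identifies $G$ with the linearized utility of the \emph{same} allocation, and the last two steps are Lemma~\ref{lem-alpha-bound} ($G \geq \alpha \hat{F}$) and Lemma~\ref{lem-alg-supopt} ($F^* \leq \hat{F}$). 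Since every link is already available, the theorem is essentially a corollary that collects the pieces.

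The one step that needs a small verification is the equality $G = \sum_{i \in T} g_i(c_i)$, which is where the definition of $G$ preceding Lemma~\ref{lem-alpha-bound} must be reconciled with the pointwise values of $g_i$. I would split the sum over the full threads $D$ and the unfull threads $E$. For $i \in D$ we have $c_i = \hat{c}_i$, so $g_i(c_i) = g_i(\hat{c}_i)$. For $i \in E$ we have $c_i < \hat{c}_i$, so the first case of Equation~\ref{eqn-g} applies and gives $g_i(c_i) = f_i(\hat{c}_i)\frac{c_i}{\hat{c}_i} = \frac{g_i(\hat{c}_i)}{\hat{c}_i} c_i$. Summing these two groups reproduces exactly $G = \sum_{i \in D} g_i(\hat{c}_i) + \sum_{i \in E} \frac{g_i(\hat{c}_i)}{\hat{c}_i} c_i$, confirming the identification.

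I would also note, for completeness, that the output is a legitimate AA solution for $A$: the problems $A$ and $B$ share the same servers and the same per-server resource bound $C$, and Algorithm 1 respects these bounds on $B$ (full threads are assigned only when $\hat{c}_i \leq C_j$, and unfull threads receive exactly the remaining $C_j$), so the allocation is feasible for $A$ as well. Consequently $F$ is the utility of a genuine feasible assignment, and the inequality $F \geq \alpha F^*$ is precisely the claimed approximation guarantee. I do not anticipate a genuine obstacle here; the only care required is in correctly invoking the piecewise form of $g_i$ on the unfull threads, after which the chain of inequalities is immediate.
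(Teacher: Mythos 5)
Your proposal is correct and follows essentially the same argument as the paper: the chain $F = \sum_{i \in T} f_i(c_i) \geq \sum_{i \in T} g_i(c_i) = G \geq \alpha \hat{F} \geq \alpha F^*$ invoking Lemma~\ref{lem-alg-fvsg}, Lemma~\ref{lem-alpha-bound}, and Lemma~\ref{lem-alg-supopt} is exactly the paper's proof. Your explicit verification that $\sum_{i \in T} g_i(c_i)$ equals the quantity $G$ defined before Lemma~\ref{lem-alpha-bound}, and your feasibility remark, are details the paper leaves implicit but are both correct.
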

\begin{proof}
	We have $F = \sum_{i \in T} f_i(c_i) \geq \sum_{i \in T} g_i(c_i) \geq \alpha \hat{F} \geq \alpha F^*$, where the first inequality follows because $f_i(c_i) \geq g_i(c_i)$ by Lemma \ref{lem-alg-fvsg}, the second inequality follows by Lemma \ref{lem-alpha-bound}, and the last inequality follows by Lemma \ref{lem-alg-supopt}. Hence, the theorem is proved.
\end{proof}

Next, we give a simple example that shows our analysis of Algorithm 1 is nearly tight.
\begin{theorem}
There exists an instance of AA where Algorithm 1 achieves $\frac{5}{6} > 0.833$ times the optimal total utility.
\end{theorem}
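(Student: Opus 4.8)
The plan is to exhibit a small, explicit instance --- two servers and three threads --- on which a valid execution of Algorithm 1 (using an adversarial resolution of the ties the algorithm breaks ``arbitrarily'') attains exactly $\frac{5}{6}$ of the optimal utility. The design is guided by the analysis in Lemma \ref{lem-alpha-bound}: the quantity $\frac{m+i^2/m}{m+i}$ equals $\frac{5}{6}$ precisely at $m=2$, $i=1$, so I want two servers, two full threads, and a single unfull thread that ends up receiving exactly half of its super-optimal allocation.

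Concretely, I would take $m=2$ servers of capacity $C$ and three threads with the (concave, piecewise-linear) utility functions $f_1(x)=\frac{\gamma}{C}x$ on $[0,C]$, and $f_2(x)=f_3(x)$ equal to $\frac{2\gamma}{C}x$ for $x\le\frac{C}{2}$ and $\gamma$ thereafter. First I would compute the super-optimal allocation: since $f_2,f_3$ have the larger slope, they are filled to their kinks $\hat c_2=\hat c_3=\frac{C}{2}$, and the remaining $C$ units go to $t_1$, giving $\hat c_1=C$ and super-optimal utility $\hat F=3\gamma$. Because the kinks coincide with the super-optimal allocation, the linearized functions $g_i$ of Equation \ref{eqn-g} agree with the $f_i$ on $[0,C]$, so $F=G$ for this instance.

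Next I would pin down the two endpoints of the ratio. For the denominator, note that the super-optimal allocation is realizable as a genuine assignment --- place $t_1$ alone on server $1$ and $t_2,t_3$ together on server $2$, each server exactly full --- so $F^*=\hat F=3\gamma$ (and no assignment can exceed $3\gamma$, since each thread's utility is capped at $\gamma$). For the numerator, I would trace Algorithm 1 under the adversarial tie-break that assigns $t_2$ to server $1$ and $t_3$ to server $2$; both servers are then left with $\frac{C}{2}$ free, so $t_1$ (needing $C$) fits on neither, enters the unfull branch of line 9, and is allocated $c_1=\frac{C}{2}=\frac12\hat c_1$. The resulting utility is $F=g_2(\tfrac{C}{2})+g_3(\tfrac{C}{2})+g_1(\tfrac{C}{2})=\gamma+\gamma+\frac{\gamma}{2}=\frac{5\gamma}{2}$, whence $F/F^*=\frac{5/2}{3}=\frac{5}{6}>0.833$.

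The point needing the most care is the legitimacy of this execution. Since the objective $g_i(\hat c_i)$ used in line 6 does not depend on the server, which of several equally good servers a full thread lands on is genuinely among the choices Algorithm 1 ``breaks arbitrarily''; exhibiting one execution that reaches $\frac{5}{6}$ is exactly what the statement asks of us. I would also verify that the gap is intrinsic rather than an artifact of $\hat F$ overestimating $F^*$ --- which is why the packability argument establishing $F^*=\hat F=3\gamma$ is the crux: it certifies that the true optimum really is $3\gamma$ while the greedy packing squanders a full half of $t_1$'s allocation.
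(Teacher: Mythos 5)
Your proposal is correct and is essentially the paper's own proof: the same two-server, three-thread instance (two threads with a kink at half capacity, one purely linear thread), the same adversarial tie-breaking execution giving $\frac{5}{2}\gamma$, and the same optimal packing giving $3\gamma$ — the paper simply fixes $C=\gamma=1$ and labels the threads differently. Your additional touches (deriving the parameters $m=2$, $i=1$ from Lemma \ref{lem-alpha-bound}, and explicitly certifying $F^*=\hat F$ via the packability argument) make the argument slightly more self-contained but do not change the approach.
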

\begin{proof}
	Consider 3 threads, and 2 servers each with 10 units of resource.  Let
\begin{equation*}
f_1(x) = \begin{cases}
\frac{1}{5}x  & \text{if } x \in [0, 5] \\
1  & \text{if } x > 5.
\end{cases} \quad
\end{equation*}
Also, let $f_2(x) = \frac{1}{10}x$.  Suppose the first two threads both have utility functions $f_1$, and the third thread has utility function $f_2$.  The super-optimal allocation is $[\hat{c}_1, \hat{c}_2, \hat{c}_3] = [5, 5, 10]$.  Algorithm 1 may assign threads 1 and 2 to different servers, with $5$ units of resource each, then assign thread 3 to server 1 with $5$ units of resource.  This has a total utility of $2 \frac{1}{2}$. On the other hand, the optimal assignment is to put threads 1 and 2 on server 1 and thread 3 on server 2.  This has a utility of 3. Thus, the ratio between the total utility achieved by Algorithm 1 and the optimal utility is $\frac{5}{6}>0.833$. Hence, the theorem is proved.
\end{proof}

Lastly, we analyze Algorithm 1's time complexity.
\vspace{-0.1cm}
\begin{theorem}\label{thm-time-1}
	Algorithm 1 runs in $O(mn^2 + n (\log mC)^2)$ time.
\end{theorem}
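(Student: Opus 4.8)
The plan is to split the running time into two parts: the one-time preprocessing that produces the super-optimal allocation $[\hat c_1,\ldots,\hat c_n]$ fed to Algorithm 1, and the cost of the greedy while-loop itself. For the preprocessing I would simply cite the fact recorded just after Lemma \ref{lem-alg-supopt}: computing the super-optimal allocation, and hence the piecewise-linear functions $g_i$ of Equation \ref{eqn-g}, reduces to a single concave allocation problem over $mC$ resources, which Galil's algorithm \cite{Galil} solves in $O(n(\log mC)^2)$ time. This accounts for the second term of the claimed bound, so it remains to show the loop runs in $O(mn^2)$.

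First I would argue that the outer while-loop executes exactly $n$ times: each pass through lines 5--14 selects one thread, fixes its $r_i$ and $c_i$, and deletes it from $R$ in line 13, and the loop stops precisely when $R=\emptyset$. Hence it suffices to bound the work of a single iteration by $O(mn)$. The key elementary observation is that every primitive operation inside an iteration is cheap: by Equation \ref{eqn-g} each $g_i$ is a two-piece piecewise-linear function, so a single evaluation $g_i(x)$ costs $O(1)$ once $\hat c_i$ and $f_i(\hat c_i)$ are stored at preprocessing time.

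Next I would bound each candidate-selection step by $O(mn)$. Forming $U$ in line 4 and taking the argmax in line 6 range over at most the $|R|\cdot m \le nm$ thread--server pairs, and for each pair testing $\hat c_i \le C_j$ and reading the fixed value $g_i(\hat c_i)$ is $O(1)$; thus lines 4--6 cost $O(mn)$. The argmax in line 9 similarly ranges over at most $nm$ pairs $(i,j)$, each needing one $O(1)$ evaluation of $g_i(C_j)$, so it too is $O(mn)$. The remaining bookkeeping in lines 7/10 and 12--14 is $O(1)$ (or $O(m)$ if one rescans the $C_j$, which is dominated). Summing the $O(mn)$ per-iteration cost over the $n$ iterations yields $O(mn^2)$ for the loop, and adding the preprocessing term gives the total $O(mn^2 + n(\log mC)^2)$.

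I expect no real obstacle here; the only thing to watch is keeping the accounting honest rather than chasing a sharper estimate. One can in fact do better in some regimes---line 6 can be implemented in $O(m+n)$ by caching $\max_j C_j$, and by the bound $|E| \le m-1$ the branch in line 9 is entered at most $m-1$ times---but such refinements are unnecessary to reach the claimed bound and only complicate the argument, so I would retain the uniform $O(mn)$-per-iteration estimate, which already matches the theorem.
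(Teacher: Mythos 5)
Your proof is correct and follows essentially the same route as the paper's: charge $O(n(\log mC)^2)$ for the super-optimal allocation via Galil's algorithm, then observe the while-loop runs $n$ times with $O(mn)$ work per iteration to build $U$ and take the argmax. The paper's own proof is just a terser version of this; your added details (constant-time evaluation of the piecewise-linear $g_i$, handling of line 9) fill in exactly what the paper leaves implicit.
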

\begin{proof}
	Computing the super-optimal allocation takes $O(n (\log mC)^2)$ time using the algorithm in \cite{Galil}. Then the algorithm runs $n$ loop iterations, where in each iteration it computes the set $U$ with $O(mn)$ elements.  Thus, the theorem follows.
\end{proof}

\section{A Faster Algorithm for Concave Functions}
\label{sec-fast-alg}
In this section, we present a faster approximation algorithm for concave utility functions that achieves the same approximation ratio as Algorithm 1 in $O(n(\log mC)^2)$ time.   

\emph{Remark:} Since the faster approximation algorithm has the same approximation ratio as Algorithm 1 and lower time complexity, one may wonder why not omit Algorithm 1 and present the faster algorithm directly. Indeed, Algorithm 1 is the starting point of the faster algorithm, and more natural to understand. By analyzing Algorithm 1, we find some important properties as shown in Section \ref{sec-alg1-analy} that result in the approximation ratio. We propose the faster algorithm by improving Algorithm 1's time complexity while keeping these properties.  

\subsection{Algorithm description}
The pseudocode of the faster approximation algorithm is shown in Algorithm 2. The algorithm also takes as input a super-optimal allocation $\hat{c}_1, \ldots, \hat{c}_n$, which we compute as in Section \ref{sec-alg-linearize}.  It sorts the threads in nonincreasing order of $g_i(\hat{c}_i)$.  It then takes threads $m+1$ to $n$ in this ordering, and sorts them again, this time in nonincreasing order of $g_i(\hat{c}_i) / \hat{c}_i$.  Next, it initializes $C_1, \ldots, C_m$ to $C$, and stores them in a max heap $H$.  $C_j$ represents the amount of remaining resources on server $j$.  The main loop of the algorithm iterates through the threads in order. Each time it chooses the server with the most remaining resources, allocates the minimum of the thread's super-optimal allocation and the server's remaining resources to it, and assigns the thread to the server.  Then $H$ is updated accordingly.   

\begin{algorithm}\small
	
\begin{algorithmic}[1]
\caption{}
\Input Super-optimal allocation $[\hat{c}_1, \ldots,  \hat{c}_n]$, and $g_1, \ldots, g_n$ as defined in Equation \ref{eqn-g} 
\Statex
\State Sort threads in nonincreasing order of $g_i(\hat{c}_i)$ as $t_1, \ldots, t_n$
\State Sort $t_{m+1}, \ldots, t_n$ in nonincreasing order of $g_i(\hat{c}_i) / \hat{c}_i$
\State $C_j \gets C$ for $j=1, \ldots, m$
\State Store $C_1, \ldots, C_m$ in a max-heap $H$
\For{$i = 1, \ldots, n$}
\State $j \gets \argmax_{1 \leq j \leq m} C_j$
\State $c_i \gets \min(\hat{c}_i, C_j)$
\State $C_j \gets C_j - c_i$, and update $H$
\State $r_i\gets j$
\EndFor\label{line-assign-2} 
\State \textbf{return} $(r_1,c_1),\ldots,(r_n, c_n)$\label{line-final-allocation}
\end{algorithmic}
\end{algorithm}

\subsubsection{An example execution}
\begin{table}\label{table-example-alg2}
\small
\caption{Example execution of Algorithm 2}
\centering
\begin{tabular}{|c|c|c|c|c|c|}
\hline
Loop index     &$j$ & Allocation & Assignment & $C_1$ & $C_2$\\
\hline
Loop 1 ($i=1$) &1   & $c_1=6$    & $r_1=1$      &  1     &7 \\
\hline
Loop 2 ($i=2$) &2   & $c_2=3$    & $r_2=2$    &   1     & 4\\
\hline
Loop 3 $(i=3)$ &2   & $c_3=2$    & $r_3=2$    &1   &  2 \\
\hline
Loop 4 $(i=4)$ &2   & $c_4=2$    & $r_4=2$    & 1  &  0  \\
\hline
\end{tabular}
\end{table}%

We show a simple example execution to illustrate Algorithm 2. We consider the same setup as the example for Algorithm 1, namely with $m=2, n=4, C=7$, $\hat{c}_1=2, \hat{c}_2=3, \hat{c}_3=3, \hat{c}_4=6$, $g_1(\hat{c}_1)=3, g_2(\hat{c}_2)=6, g_3(\hat{c}_3)=4, g_4(\hat{c}_4)=7$. Table III shows how Algorithm 2 produces the assignment and allocation of the linearized problem. Initially, the threads are sorted in order $t_4, t_2, t_1, t_3$ in lines 1 and 2.  For presentation purposes, we reindex the threads as $t_1, t_2, t_3, t_4$.  In other words, the newly indexed threads have $\hat{c}_1=6, \hat{c}_2=3, \hat{c}_3=2, \hat{c}_4=3$, $g_1(\hat{c}_1)=7, g_2(\hat{c}_2)=6, g_3(\hat{c}_3)=4, g_4(\hat{c}_4)=7$. $C_1, C_2$ are initially 7, and the algorithm runs for 4 iterations.  As shown in the table III, in each iteration, the column labeled $j$ shows the index of the server with the maximum remaining resource, as in line 6 of Algorithm 2, column \emph{Allocation} shows the final allocation of thread $t_i$, and column \emph{Assignment} shows the final assignment of thread $t_i$. 

In the first iteration, $j=1$ in line 6 since server $s_1$ has the maximum remaining resource. The algorithm assigns thread $t_1$ to server $s_1$ in line 9, and allocates 6 units of resource on server $s_1$ to it in line 7.  $C_1=1$ in line 8, and $C_2$ remains unchanged. The algorithm then continues this way for three more iterations, producing the final allocation and assignment shown in Table III.


\vspace{-0.2cm}
\subsection{Algorithm analysis}\label{sec-fast-alg-analy}
\vspace{-0.2cm}
We now show Algorithm 2 achieves an $\alpha = 2(\sqrt{2}-1)$ approximation ratio, and runs in $O(n(\log mC)^2)$ time. The proof of the approximation ratio uses exactly the same set of lemmas as in Section \ref{sec-alg-linearize}, \ref{sec-solve-linear} and \ref{sec-alg1-analy}.  The proofs for most of the lemmas are also similar. Rather than replicating them, we will go through the lemmas and point out any differences in the proofs.  Please refer to Sections \ref{sec-alg-linearize}, \ref{sec-solve-linear} and \ref{sec-alg1-analy} for the definitions, lemma statements and original proofs.
\begin{itemize}
\item \emph{Lemma \ref{lem-alg-oneEperserver}}  The proof of this lemma depended on the fact that in Algorithm 1 if we assign a second $E$ thread to a server, then all the other servers have no remaining resources.  This is also true in Algorithm 2, since in line 6 we assign a thread to a server with the most remaining resources, and so if when we assign a second $E$ thread $t$ to a server $s$ and there was another server $s'$ with positive remaining resources, we would assign $t$ to $s'$ instead, a contradiction.
\item \emph{Lemma \ref{lem-alloc-wrong}} The only statement we need to check from the original proof is that for all $i \in E$ we have $c_i \geq C_j$. But this is true in Algorithm 2 because if there were any $c_i < C_j$, line 6 of Algorithm 2 would assign thread $i$ to server $j$ instead of $i$'s current server, a contradiction. All the other statements in the original proof then follow.
\item \emph{Lemma \ref{lem-Dthruput}} This follows because lines 1 and 2 of Algorithm 2 show that the first $m$ assigned threads have at least as much super-optimal utility as the remaining $n-m$ threads.  Also, the first $m$ threads must be in $D$, since there is always a server with $C$ resources during the first $m$ iterations of Algorithm 2.  Thus, all threads in $E$ are among the last $n-m$ assigned threads, and their maximum super-optimal utility is no more than the minimum utility of any $D$ thread.
\item \emph{Lemma \ref{lem-slope-alloc}}  As we stated above, all threads in $E$ must be among the last $n-m$ assigned by Algorithm 2.  That is, they are among threads $t_{m+1}, \ldots, t_n$.  In line 2 these threads are sorted in nondecreasing order of $g_i(\hat{c}_i) / \hat{c}_i$.  Thus, the lemma follows. 
\end{itemize}
Given the preceding lemmas, we can state the approximation ratio of Algorithm 2.  The proof of the theorem is the same as the proof of Theorem \ref{thm-alg1-ratio}, and is omitted.
\begin{theorem}
	Let $F$ be the total utility from the assignment produced by Algorithm 2, and let $F^*$ be the optimal total utility.  Then	
	$F \geq \alpha F^*$.
\end{theorem}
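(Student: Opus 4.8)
The plan is to follow the identical chain of inequalities used in the proof of Theorem \ref{thm-alg1-ratio}, exploiting the fact that every lemma in its supporting chain has already been re-established for Algorithm 2 in the preceding bullet-point analysis. Let $c_1, \ldots, c_n$ be the allocation Algorithm 2 returns, let $D$ and $E$ be the full and unfull threads as before, and set $G = \sum_{i \in D} g_i(\hat{c}_i) + \sum_{i \in E} \frac{g_i(\hat{c}_i)}{\hat{c}_i} c_i$. The first step is to observe that $G = \sum_{i \in T} g_i(c_i)$: for $i \in D$ we have $c_i = \hat{c}_i$ and hence $g_i(c_i) = g_i(\hat{c}_i)$, while for $i \in E$ we have $c_i < \hat{c}_i$, so by the definition of $g_i$ in Equation \ref{eqn-g} we get $g_i(c_i) = \frac{g_i(\hat{c}_i)}{\hat{c}_i} c_i$.

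Next I would assemble the chain
\begin{equation*}
F = \sum_{i \in T} f_i(c_i) \geq \sum_{i \in T} g_i(c_i) = G \geq \alpha \hat{F} \geq \alpha F^*.
\end{equation*}
The first inequality is Lemma \ref{lem-alg-fvsg}, which compares $f_i$ and $g_i$ pointwise and is a property of the utility functions alone, hence independent of which algorithm produced the $c_i$. The central inequality $G \geq \alpha \hat{F}$ is Lemma \ref{lem-alpha-bound}, and the final inequality $\hat{F} \geq F^*$ is Lemma \ref{lem-alg-supopt}. Chaining these yields $F \geq \alpha F^*$ immediately, which is exactly the claim.

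The only point that genuinely needs care is the middle inequality $G \geq \alpha \hat{F}$, since Lemma \ref{lem-alpha-bound} was originally proved for the output of Algorithm 1. Its proof invokes only Corollary \ref{cor-Dthruput}, Lemma \ref{lem-alloc-wrong}, Lemma \ref{lem-slope-alloc}, and Facts \ref{fact-cs} and \ref{fact-cheby}, and uses no property of Algorithm 1 beyond those. Because the earlier bullet-point discussion re-derives each of Lemmas \ref{lem-alg-oneEperserver} through \ref{lem-slope-alloc} for Algorithm 2 — the crucial observation being that Algorithm 2's greedy rule of always filling the server with the most free resources, with threads pre-sorted first by $g_i(\hat{c}_i)$ and then by $g_i(\hat{c}_i)/\hat{c}_i$, reproduces precisely the structural facts those proofs rely on (one unfull thread per server, the allocation lower bound for $E$, and the slope-versus-allocation ordering) — Lemma \ref{lem-alpha-bound} carries over verbatim. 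I expect this verification to be the main (and essentially only) obstacle: once it is confirmed, the theorem follows directly from the displayed chain, with no additional computation required.
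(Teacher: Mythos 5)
Your proposal is correct and follows essentially the same route as the paper, which itself states that the proof is identical to that of Theorem \ref{thm-alg1-ratio} once the supporting lemmas (Lemmas \ref{lem-alg-oneEperserver} through \ref{lem-alpha-bound}) have been re-verified for Algorithm 2, exactly as in the paper's bullet-point discussion. Your added observation that $G = \sum_{i \in T} g_i(c_i)$, justified via Equation \ref{eqn-g} for the full and unfull threads separately, is a small explicit step the paper leaves implicit, but it does not change the argument.
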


Lastly, we analyze Algorithm 2's time complexity.
\begin{theorem}
\label{thm-time-2}
Algorithm 2 runs in $O(n(\log mC)^2)$ time. 
\end{theorem}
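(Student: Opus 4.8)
The plan is to sum the costs of Algorithm 2's phases and show that the super-optimal allocation computation dominates, exactly as in the proof of Theorem \ref{thm-time-1}. First I would reuse the bound from \cite{Galil}, noting that computing the super-optimal allocation $\hat{c}_1, \ldots, \hat{c}_n$ treats the $m$ servers as a single pool of $mC$ resources and so runs in $O(n(\log mC)^2)$ time; this is the same preprocessing step used by Algorithm 1.

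Next I would account for the remaining work. The two sorting steps in lines 1 and 2 each sort at most $n$ threads by a precomputed key, so each costs $O(n \log n)$. Building the max-heap $H$ in lines 3--4 takes $O(m)$ time. The main loop in lines 5--9 runs $n$ iterations; in each, reading the maximum $C_j$ (line 6) costs $O(1)$, computing $c_i$ (line 7) costs $O(1)$, and updating $H$ after decreasing the chosen key and sifting down (line 8) costs $O(\log m)$. Hence the loop contributes $O(n \log m)$, and the overall running time is $O(n(\log mC)^2 + n \log n + m + n \log m)$.

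It then remains to collapse this sum to $O(n(\log mC)^2)$. I would first assume WLOG that $m \leq n$, since if there were more servers than threads we could discard all but $n$ of them. This makes the $O(m)$ and $O(n \log m)$ terms $O(n)$ and $O(n \log n)$, respectively, so only $O(n \log n)$ remains to be absorbed. Because resources are integer-valued and total $mC$, at most $mC$ threads can receive a positive allocation, so WLOG $n \leq mC$; then $\log n \leq \log mC \leq (\log mC)^2$ whenever $mC \geq 2$. Substituting gives $O(n \log n) = O(n (\log mC)^2)$, so every term is dominated by the claimed bound and the theorem follows.

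The step I expect to be the main obstacle is justifying that the $O(n \log n)$ sorting cost stays within the target bound. Unlike Algorithm 1, whose $O(mn^2)$ term trivially absorbs the sorts, Algorithm 2 has no such slack, so the argument relies on the (reasonable but implicit) assumption $n \leq mC$, which follows from the integrality of the resource units underlying the \cite{Galil} binary search. If one declines to make this assumption, the honest bound would instead be $O(n(\log mC)^2 + n \log n)$.
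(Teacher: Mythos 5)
Your proposal is correct and takes essentially the same approach as the paper's proof: the super-optimal allocation via \cite{Galil} costs $O(n(\log mC)^2)$ and dominates, the two sorts cost $O(n\log n)$, and the heap-based main loop costs $O(n\log m)$. The only difference is in how the sorting term is absorbed --- the paper simply assumes $\log n = O((\log mC)^2)$ on the grounds that $C$ is large in practice, whereas you justify the same absorption more carefully via the integrality bound $n \leq mC$; this is a tidier version of the same step, not a different argument.
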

\begin{proof}
Finding the super-optimal allocation takes $O(n (\log mC)^2$ time using the algorithm in \cite{Galil}.  Steps 1 and 2 take $O(n \log n)$ time.  Since $C$ is usually large in practice, we can assume that $\log n = O(\log mC)^2$. Each iteration of the main for loop takes $O(\log m)$ time to extract the maximum element from $H$ and update $H$. Thus, the entire for loop takes $O(n \log m)$ time.  Thus, the overall running time is dominated by the time to find a super-optimal allocation, and  the theorem follows.
\end{proof}
Comparing Theorem \ref{thm-time-2} and Theorem \ref{thm-time-1}, we see that Algorithm 2 has lower time complexity than Algorithm 1.

\section{Algorithm for Nonconcave Functions}\label{sec-nonconcave}
In this section, we present an approximation algorithm for the AA problem for threads with nonconcave utility functions\footnote{\blue{We allow some of the threads to have nonconcave utility functions, and others to have concave ones.}}.  One situation where such functions arise is in cache allocation, where increasing the cache allocated for a thread beyond a certain size allows its entire working set to fit into the fastest level of cache, and leads to a large nonconcave increase in performance. As nonconcave functions pose more challenges than concave ones, our algorithm achieves an approximation ratio of $\frac{1}{2}$ instead of $\alpha=2(\sqrt{2}-1)$. 

The basic model we consider is the same as in Section \ref{sec-model}.  We additionally assume that each utility function consists of one or more concave or convex segments. That is, for each utility function $f:[0, C] \rightarrow \mathbb{Z}^{\geq 0}$, there exist $0 = b_0 < b_1 < b_2 < ... < b_k = C$, such that $f$ is either concave or convex in the interval $(b_{i-1}, b_i]$, for $1 \leq i \leq k$.  Let $s$ be the maximum number of segments in any utility function.  Note that in the case of $s = C$, $f$ can be an arbitrary nondecreasing function.  However, it was observed in \cite{Lai} that $s$ is typically a small value, \emph{e.g.} 3 in practice.  Moreover, a small value of $s$ enables our algorithm to run faster.  

We present an approximation algorithm, which we call \blue{\emph{AANC}}, to solve the AA problem in this setting. \emph{AANC} follows a similar structure as Algorithm 2, with the following difference.  Recall that the input to Algorithm 2 is a super-optimal allocation, which for concave utility functions can be computed by the algorithm from  \cite{Galil}.  In the nonconcave setting, we use the fast algorithm from \cite{Lai}, which computes an optimal allocation $[\hat{c_1}, \ldots, \hat{c_n}]$ for a single server with $mC$ resources in $O(s nmC \alpha(mC) (\log mC)^2)$ time, where $\alpha$ denotes the inverse Ackermann function. Thus, we define \emph{AANC} to be the same as Algorithm 2, but using $[\hat{c_1}, \ldots, \hat{c_n}]$ as the initial input.   

\emph{AANC} achieves an approximation ratio of $\frac{1}{2}$. The main reason for the reduced approximation ratio is that the property that $f_i\geq g_i$ for all $i\in T$ used in Lemma \ref{lem-alg-fvsg} does not hold for nonconcave utility functions. Thus, while Lemma \ref{lem-alpha-bound}, which states that the total utility using the linearized $g_i$ functions is at least $\alpha$ times the super-optimal utility, continues to hold in the nonconcave case, it does imply that the total utility under the original nonconcave utility functions $f_i$ is also $\geq \alpha$ times the super-optimal utility.  Nevertheless, we can show that even in the nonconcave setting, the number of threads which do not get their super-optimal allocation is small, and their super-optimal utilities are smaller than those of threads which do get their super-optimal allocations. Thus, the threads still obtain a large fraction of the super-optimal utility, leading to a $\frac{1}{2}$ approximation ratio.  

We now analyze the quality of the assignment produced by \emph{AANC}. The proof uses several results, including Lemma \ref{lem-alg-supopt}, Lemma \ref{lem-numEthread} and Corollary \ref{cor-Dthruput}, which were proven in Sections \ref{sec-alg-linearize} and \ref{sec-alg1-analy}.  The proofs of these lemmas did not rely on the concavity of the $f_i$ functions, and thus continue to hold in the nonconcave setting. Please refer to Sections \ref{sec-alg-linearize}, \ref{sec-solve-linear} and \ref{sec-alg1-analy}, \ref{sec-fast-alg} for the definitions, lemma statements and original proofs.

We first show that the super-optimal utility of the full threads in $D$ is no smaller than that of the unfull threads in $E$. 
\begin{lemma}\small\label{lem-nonconcave-sumDsumE}
$\sum_{i\in D}f_i(\hat{c}_i)\geq \sum_{i\in E}f_i(\hat{c}_i)$.
\end{lemma}
\begin{proof}
Recall the definition of $\gamma = \max_{i \in E} g_i(\hat{c}_i)$ in Section \ref{sec-alg1-analy}. We first claim  $\sum_{i \in D} f_i(\hat{c}_i) \geq m \gamma$.  This is because by the definition of function $g_i$, we have $f_i(\hat{c}_i)=g_i(\hat{c}_i)=g_i(c_i)$ for any thread $i\in D$. Thus, by Corollary \ref{cor-Dthruput}, we have $\sum_{i \in D} f_i(\hat{c}_i) \geq m \gamma$. 

Next, we claim $\sum_{i \in E} f_i(\hat{c}_i) \leq m \gamma$.  We have $\forall i\in E, g_i(\hat{c}_i)\leq \gamma$. Also, $|E| \leq m-1$ by Lemma \ref{lem-numEthread}.  Thus, since $f_i(\hat{c}_i)=g_i(\hat{c}_i)$, we have $\sum_{i \in E} f_i(\hat{c}_i) = \sum_{i \in E} g_i(\hat{c}_i) \leq (m-1) \gamma$.

Combining the above, we have $\sum_{i \in D} f_i(\hat{c}_i) \geq \sum_{i \in E} f_i(\hat{c}_i)$.
\end{proof}

\emph{AANC} produces an allocation $c_1, \ldots, c_n$ with total utility $F =\sum_{i \in D} f_i(\hat{c}_i)+\sum_{i \in E} f_i(c_i)$.  We now prove that this allocation is an $\frac{1}{2}$ approximation to the super-optimal utility $\hat{F} =\sum_{i \in D} f_i(\hat{c}_i)+\sum_{i \in E} f_i(\hat{c}_i)$.
\begin{lemma}\small\label{lem-nonconcave-bound}
$F\geq\frac{1}{2}\hat{F}$.
\end{lemma}
\begin{proof}
We have
	\begin{eqnarray*}\small
	\frac{F}{\hat{F}} & = & \frac{\sum_{i\in D} f_i(\hat{c}_i)+\sum_{i\in E}f_i(c_i)}{\sum_{i\in D} f_i(\hat{c}_i)+ \sum_{i\in E} f_i(\hat{c}_i)} \\
                    &\geq&\frac{\sum_{i\in D} f_i(\hat{c}_i)}{\sum_{i\in D} f_i(\hat{c}_i)+ \sum_{i\in E} f_i(\hat{c}_i)}\\
										&\geq&\frac{1}{2}.
	\end{eqnarray*}
The last inequality follows since $\sum_{i\in D}f_i(\hat{c}_i)\geq \sum_{i\in E}f_i(\hat{c}_i)$ by Lemma \ref{lem-nonconcave-sumDsumE}. Thus, the lemma is proved.
\end{proof}

Recall that $F^*$ is the optimal total utility. Also, $F^* \leq \hat{F}$, by Lemma \ref{lem-alg-supopt}.  Thus, combining these with Lemma \ref{lem-nonconcave-bound}, we have the following bound on the approximation ratio of \emph{AANC}.

\begin{theorem}
	$F \geq \frac{1}{2} F^*$.
\end{theorem}
	
Next, we give an instance of AA which shows our analysis of \emph{AANC} is nearly tight.
\vspace{-0.1cm}
\begin{theorem}
For any $\epsilon > 0$, there exists an instance of AA such that \emph{AANC} achieves $\frac{1}{2}+\epsilon$ times the optimal total utility.
\end{theorem}
\begin{proof}
	Consider $2m-1$ threads, and $m$ servers each with $m$ units of resource.  Let
\begin{equation*}\small
f_1(x) = \begin{cases}
x  & \text{if } x \in [0, 1] \\
1  & \text{if } x > 1,
\end{cases} \quad
\end{equation*}
and
\begin{equation*}\small
f_2(x) = \begin{cases}
0  & \text{if } x \in [0, m-1] \\
1  & \text{if } x > m-1.
\end{cases} \quad
\end{equation*}
 Suppose the first $m$ threads all have utility functions $f_1$, and the remaining $m-1$ threads have utility functions $f_2$. The super-optimal allocation is $\hat{c}_1=\ldots=\hat{c}_m=1$, $\hat{c}_{m+1}=\ldots=\hat{c}_{2m-1}=m$.  \emph{AANC} may assign threads $1, 2,\ldots, m$ all to different servers, with $1$ unit of resource each, and then assign threads $m+1,\ldots, 2m-1$ also all to different servers with $m-1$ units of resource each.  This achieves a total utility of $m$.  On the other hand, the optimal assignment is to put threads $1, 2,\ldots, m$ on server 1 and threads $m+1,\ldots, 2m-1$ on servers $2,3,\ldots,m$, respectively.  This has a total utility of $2m-1$.  Thus, \emph{AANC} achieves an approximation ratio of $\frac{m}{2m-1} \in [\frac{1}{2}, \frac{1}{2}+\epsilon]$ for a sufficiently large $m$.  
\end{proof}

Lastly, we analyze \emph{AANC}'s time complexity.
\begin{theorem}
\label{thm-time-3}
\emph{AANC} runs in $O(s nmC \alpha(mC) (\log mC)^2)$ time, where $s$ is the maximum number of concave or convex segments in any utility function $f_i$, and $\alpha(mC)$ is the inverse Ackermann function. 
\end{theorem}
\begin{proof}
Finding the super-optimal allocation takes $O(s nmC \alpha(mC) (\log mC)^2)$ time using the algorithm in \cite{Lai}.   Note that $\alpha(mC) \leq 4$ for all realistic values of $m$ and $C$.  Similar to the proof of Theorem \ref{thm-time-2}, the overall running time of \emph{AANC} is dominated by the time to find a super-optimal allocation, and the theorem follows.
\end{proof}

\section{Experimental Evaluation}
\vspace{-0.1cm}
In this section we experimentally evaluate the performance of our algorithms using both synthetic and real-world utility functions.  We compare the total utility our algorithms achieve with the super-optimal (SO) utility, which is at least as large as the optimal utility.  We also compare the algorithm with several simple but practical heuristics we name UU, UR, RU and RR\footnote{\blue{To the best of our knowledge, we are the first to study the thread assignment and resource allocation problems in a unified context, and we are unaware of other algorithms in the literature which can be directly compared to our algorithms. Thus, to evaluate our algorithms' performance, we compare them to several simple but practically useful heuristics.}}. The UU (uniform-uniform) heuristic assigns threads in a round robin manner to the servers, and allocates the threads assigned to a server the same amount of resources. UR (uniform-random) assigns threads in a round robin manner, and allocates threads a random amount of resources on each server. RU (random-uniform) assigns threads to random servers, and equally allocates resources on each server. Finally, RR (random-random) randomly assigns threads and allocates them random amounts of resource.  

Our simulation experiments use threads with synthetic random utility functions generated according to various probability distributions as described below.   

To generate the random concave utility functions, we fix an amount of resource $C$ on each server, and set the value of the utility function at 0 to be 0.  We generate two values $v$ and $w$ according to the distribution $H$, conditioned on $w \leq v$, and set the value of the utility function at $\frac{C}{2}$ to $v$, and the value at $C$ to $v+w$. Then we apply the PCHIP interpolation function from Matlab to the three generated points to produce a concave utility function. 

To generate the random nonconcave utility functions, we also set the function to be 0 at allocation 0.  We generate a value $b\in (0,C)$ according to a distribution $H$, then divide  $[0, C]$ into two segments $[0,b]$ and $[b,C]$.  Then we generate two concave functions in these segments using a similar approach as for generating a concave utility function. In all the experiments on synthetic utility functions, we set the number of servers to be $m=8$ and the resource size to be $C=100$, and test the effects of varying different parameters. One parameter is $\beta = \frac{n}{m}$, the average number of threads per server. The results in the following sections show the average performance from 1000 random trials.


\vspace{-0.2cm}
\subsection{Concave Utility Functions}
\vspace{-0.1cm}
We first look at the performance of Algorithm 2 with concave utility functions; we omit testing Algorithm 1, since it achieves the same approximation ratio as Algorithm 2.  \blue{We note that for $m=8, n=100, C=100$, Algorithm 2 terminated in 9 ms.  This can be further improved by implementing the algorithm in C instead of Matlab and using better data structures.}
\subsubsection{Uniform and normal distributions}

\begin{figure}
\label{figure-uniform-normal}
	\centering 
	\subfigure[Uniform distribution]{ 
		\label{figure-uniform-nvsm-average} 
		\includegraphics[height=4cm,width=5.2cm]{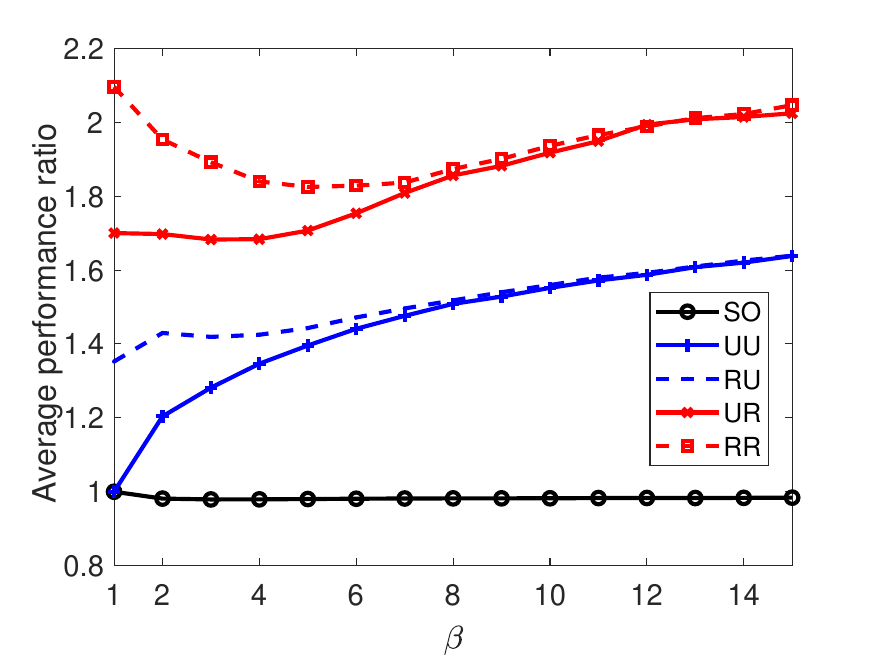}} 
	\subfigure[Normal distribution]{ 
		\label{figure-normal-nvsm-average} 
		\includegraphics[height=4cm,width=5.2cm]{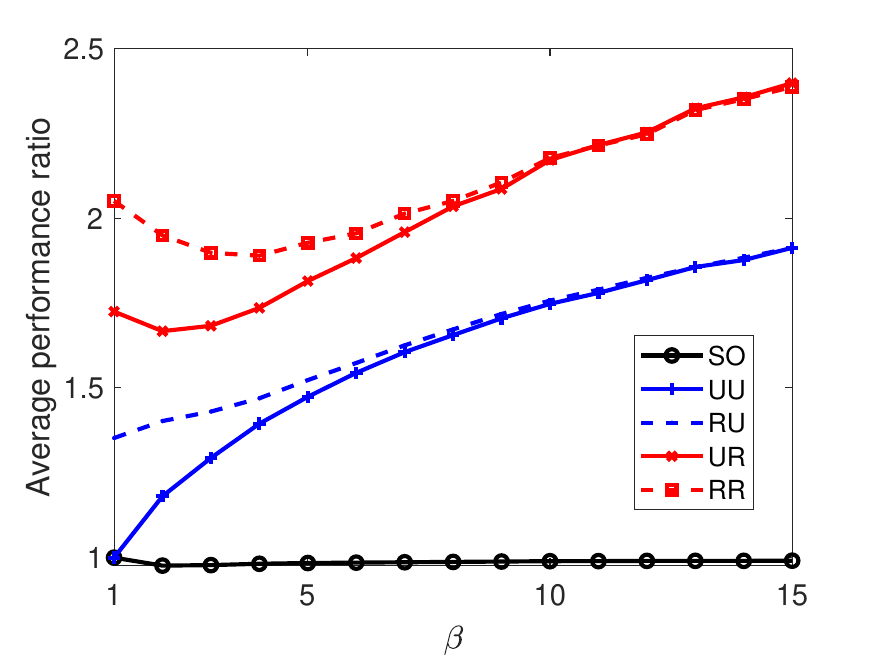}}  
	\caption{Average performance of Algorithm 2 versus SO,UU,RU,UR, and RR as a function of $\beta$ under the uniform and normal distributions.} 
	\label{figure-nvsm-average-average}
\end{figure}


We first look the total utility obtained by Algorithm 2 compared to those obtained by the SO, UU, UR, RU and RR algorithms on threads with concave utility functions generated according to the uniform and normal distributions. We set the mean and standard deviation of the normal distribution to be $10$ and $20$, respectively. 

Figures \ref{figure-uniform-nvsm-average} and \ref{figure-normal-nvsm-average} show the average over 1000 random runs of the ratio of Algorithm 2's total utility to the utilities of the other algorithms, for $\beta$ varying between 1 to 15. The behaviors for both distributions are similar.  Compared to SO, Algorithm 2's utility ratio never drops below 0.99, showing that it always achieves at least 99\% of the optimal utility.  The ratios of Algorithm 2's total utility compared to those of UU, UR, RU and RR are always above 1, showing that it always perform better than the simple heuristics.  For small values of $\beta$, UU performs well.  Indeed, for $\beta = 1$, UU achieves the optimal utility because it places one thread on each server and allocates it all the resources.  UR does not achieve optimal utility even for $\beta=1$, since it allocates threads random amounts of resources. RU and RR may allocate multiple threads per server, and also do not achieve the optimal utility.  As $\beta$ grows, the performance of the heuristics gets worse relative to Algorithm 2.  This is because as the number of threads grows, it becomes more likely that some threads have very high maximum utility.  These threads need to be assigned and allocated carefully.  For example, they should be assigned to different servers and allocated as much resources as possible.  The heuristics likely fail to do this, and hence obtain low performance. The performance of UR and RR, as well as those of UU and RU converge as $\beta$ grows.  This is because both random and uniform assignments assign the threads roughly evenly between the servers for large $\beta$.  Also, the performance of UU and RU are substantially better than UR and RR, which indicates that the way in which resources are allocated has a larger effect on performance than how threads are assigned, and that uniform allocation is generally better than random allocation.  


\subsubsection{Power law distribution}

\begin{figure}
	\centering 
	\subfigure[Varying $\beta$]{ 
		\label{figure-powerlaw-nvsm-average} 
		\includegraphics[height=4cm,width=5.2cm]{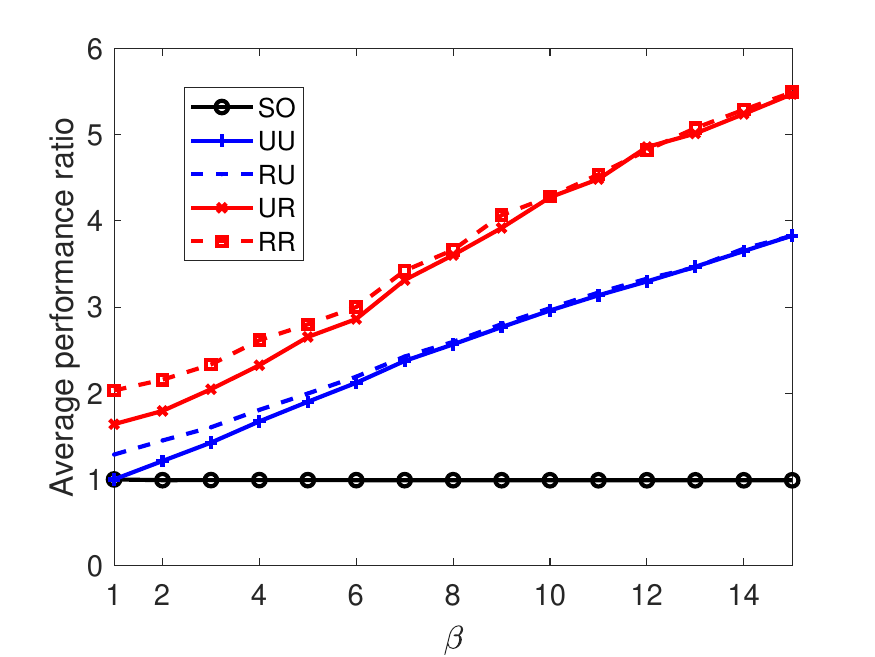}} 
	\subfigure[Varying $\mu$]{ 
		\label{figure-powerlaw-gamma-average} 
		\includegraphics[height=4cm,width=5.2cm]{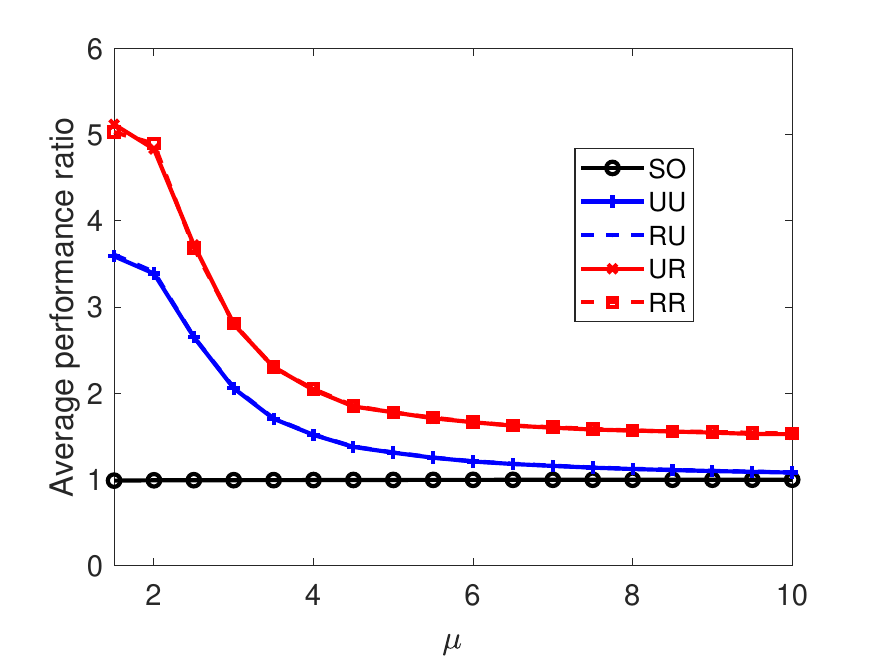}}    
	\caption{Average performance of Algorithm 2 versus SO,UU,RU,UR, and RR as a function of $\beta$ and $\mu$ under the power law distribution.} 
	\label{figure-powerlaw-distribution-average}
\end{figure}

\begin{figure}
	\centering 
	\subfigure[Varying $\beta$]{
		\label{figure-discrete-nvsm-average} 
		\includegraphics[height=4cm,width=5.2cm]{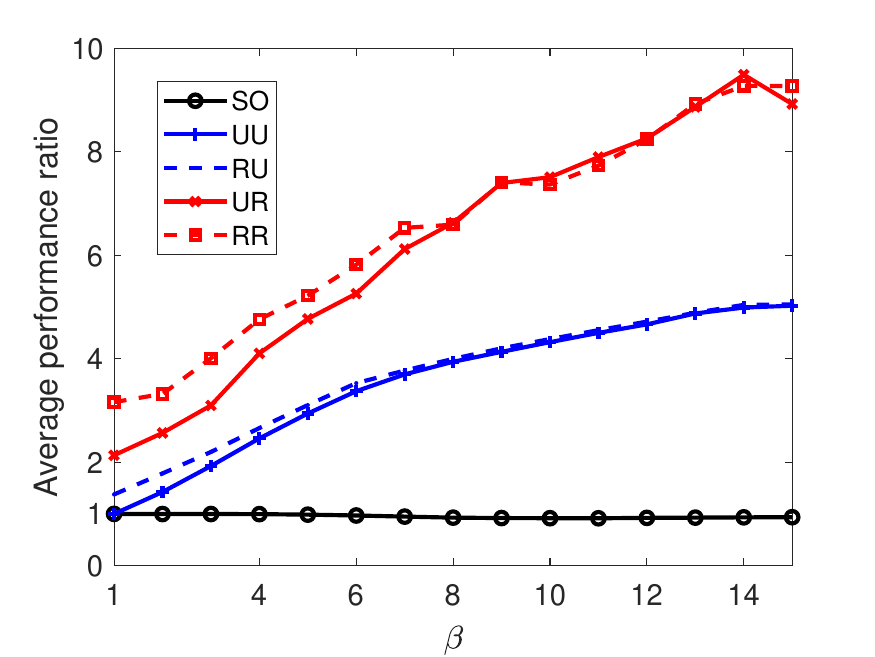}}  
	\subfigure[Varying $\gamma$]{ 
		\label{figure-discrete-gamma-average} 
		\includegraphics[height=4cm,width=5.2cm]{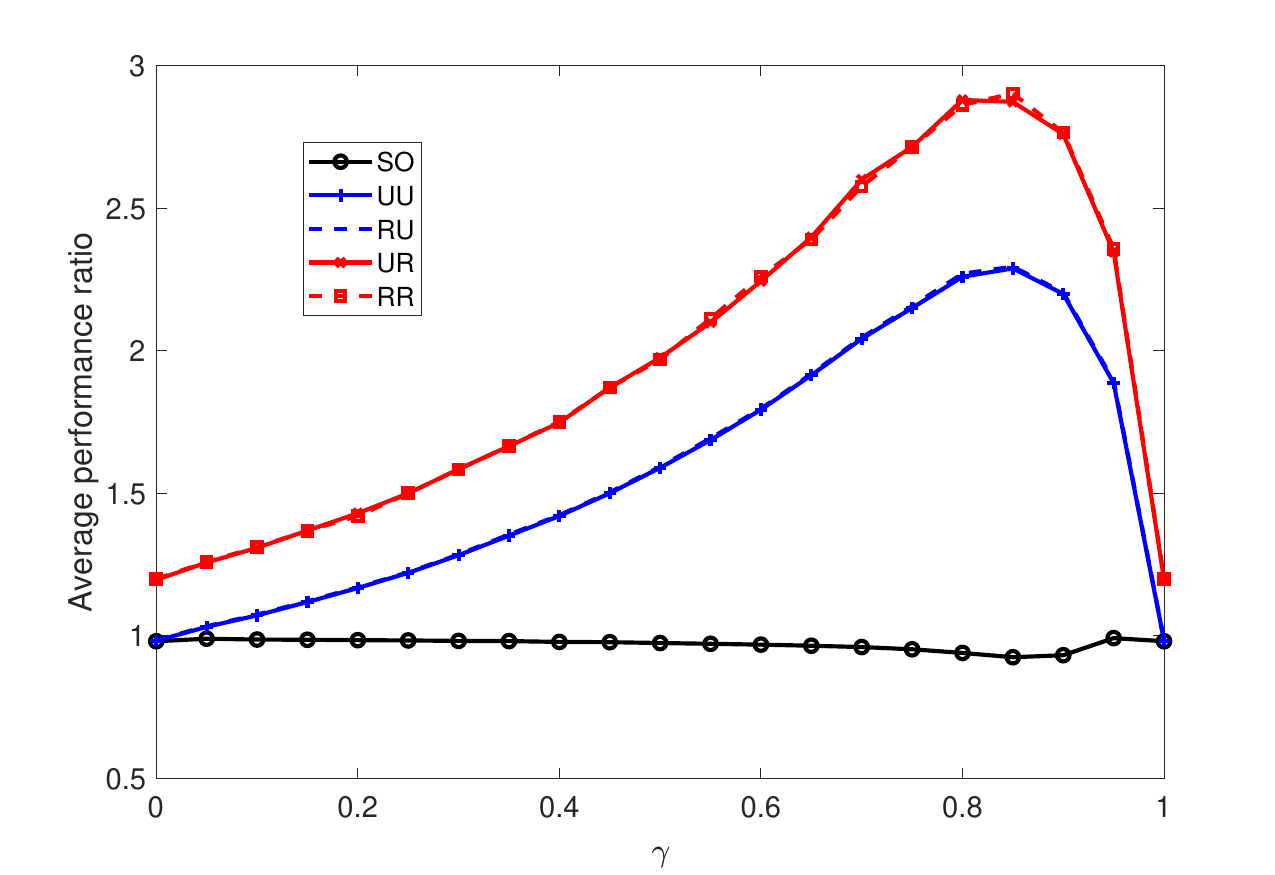}}  
	\subfigure[Varying $\theta$]{
		\label{figure-discrete-theta-average} 
		\includegraphics[height=4cm,width=5.2cm]{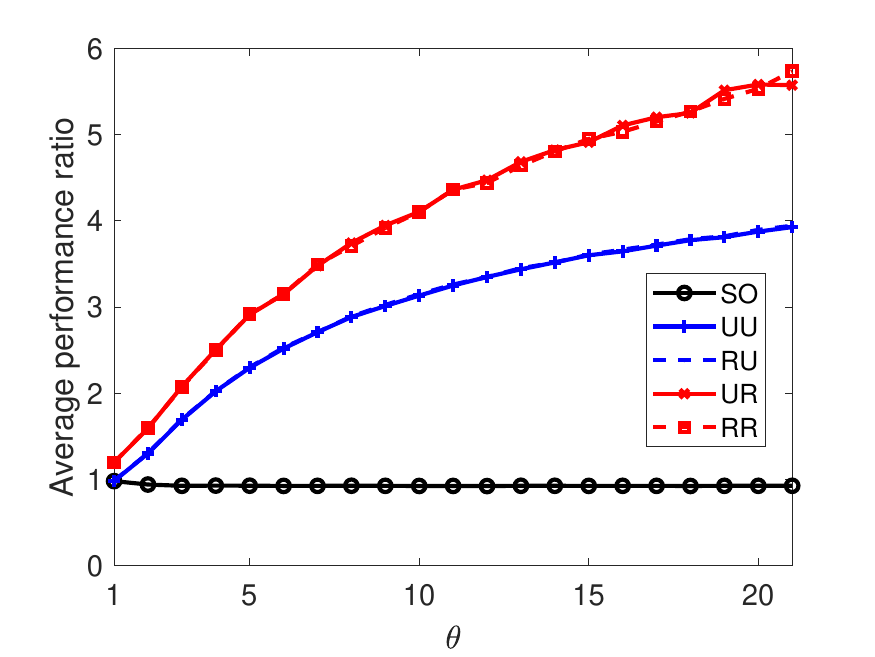}}  
	\caption{Average performance of Algorithm 2 versus SO,UU,RU,UR, and RR as a function of $\beta, \gamma$ and $\theta$ under the discrete distribution.} 
	\label{figure-discrete-distribution-average}
\end{figure}

We now look at the performance of Algorithm 2 using threads with utility functions generated according to the power law distribution.  Here, each value $x$ has a probability $\lambda x^{-\mu}$ of occurring, for some $\mu > 1$ and normalization factor $\lambda$.  Figure \ref{figure-powerlaw-nvsm-average} shows the effect of varying $\beta$ while fixing $\mu = 2$.  Here we see the same trends as those under the uniform and normal distributions, namely that Algorithm 2 always performs very close to optimal, while the performance of the heuristics gets worse with increasing $\beta$.  However, the rate of performance degradation is faster than with the uniform and normal distributions.  This is because the power law distribution with $\mu = 2$ is more likely to generate threads with very different maximum utilities.  These threads must be carefully assigned and allocated, which the heuristics fail to do.  For $\beta=15$, Algorithm 2 is 3.9 times better than UU and RU, and 5.7 times better than UR and RR.

Figure \ref{figure-powerlaw-gamma-average} shows the effect of varying $\mu$, using a fixed $\beta = 5$.  Algorithm 2's performance is nearly optimal.  In addition, the performance of the heuristics improves as $\mu$ increases.  This is because for higher values of $\mu$, it is unlikely that there are threads with very high maximum utilities.  So, since the maximum utilities of the threads are roughly the same, almost any even assignment of the threads works well.  Despite this, we still observe that UU and RU perform better than UR and RR.  This is because when the threads are roughly the same, the concavity of the utility functions implies the optimal allocation is to give each thread nearly the same amount of resources.  This is done by UU and RU but not by UR and RR.
\subsubsection{Discrete distribution}
We now look at the performance using utility functions generated by a discrete distribution. This distribution takes on only two values $\ell, h$, with $\ell < h$. $\gamma$ is a parameter that controls the probability that $\ell$ occurs, and $\theta = \frac{h}{\ell}$ is a parameter that controls the relative size of the values. Figure \ref{figure-discrete-nvsm-average} shows Algorithm 2's performance as we vary $\beta$, fixing $\gamma = 0.85$ and $\theta = 100$.  The same trends as with the other distributions are observed. Specifically, our algorithm is much better than other algorithms when $\beta$ is large. For example, for $\beta=15$, our algorithm is 9 times better than UR and RR, and 5 times better than UU and RU. The reason behind the significant performance improvement is that the discrete distribution under $\theta=100$ is very likely to generate threads with very different maximum utilities.  Figure \ref{figure-discrete-gamma-average} shows the effect of varying $\gamma$, when $\beta = 5$ and $\theta = 5$.  Our algorithm achieves the lowest performance for $\gamma = 0.8$, when we achieve 92\% of the super-optimal utility. The four heuristics also perform worst for this value.  For $\gamma$ close to 0 or 1, all the heuristics perform well, since these correspond to instances where either $h$ or $\ell$ is very likely to occur, so that almost all the threads have the same maximum utility. Lastly, we consider the effect of varying $\theta$. Here, as $\theta$ increases, the difference between the high and low utilities becomes more evident, and the effects of poor thread assignments or misallocating resources become more serious.  Hence, the performance of the heuristics decreases with $\theta$.  Meanwhile, Algorithm 2 always achieves over 92\% of the optimal utility.

\vspace{-0.2cm}
\subsection{Nonconcave Utility Functions}
\vspace{-0.1cm}
In this section we evaluate the performance of \emph{AANC} on threads with nonconcave utility functions. \blue{We note that for $m=8, n=100, C=100$, \emph{AANC} terminates in 9.47 seconds.  Our implementation of AANC again used Matlab, and we believe the running time of the algorithm can be substantially improved using more optimized code.  However, we leave a more efficient implementation as future work.} 
\subsubsection{Uniform distribution}
We first consider the total utility obtained by \emph{AANC} compared to SO, UU, UR, RU and RR on threads with nonconcave utility functions generated according to the uniform distribution. Specifically, for each thread we generate a nonconcave utility function with two concave segments, where each segment is generated using the approach described earlier in this section. Figures \ref{figure-uniform-nvsm-average-nonconcave} shows the average ratio of \emph{AANC}'s total utility versus the utilities of the other algorithms, for $\beta$ varying between 1 to 15. The behaviors are similar to that for Algorithm 2 using concave utility functions generated according to the uniform distribution. Compared to SO, our performance never drops below 0.98, which is slightly lower than the average ratio of 0.99 achieved by Algorithm 2. Likewise, \emph{AANC} performs similarly to Algorithm 2 for other distributions.  Due to space limitations we omit an in-depth discussion.

\begin{figure}
	\centering 
	\includegraphics[height=4cm,width=5.2cm]{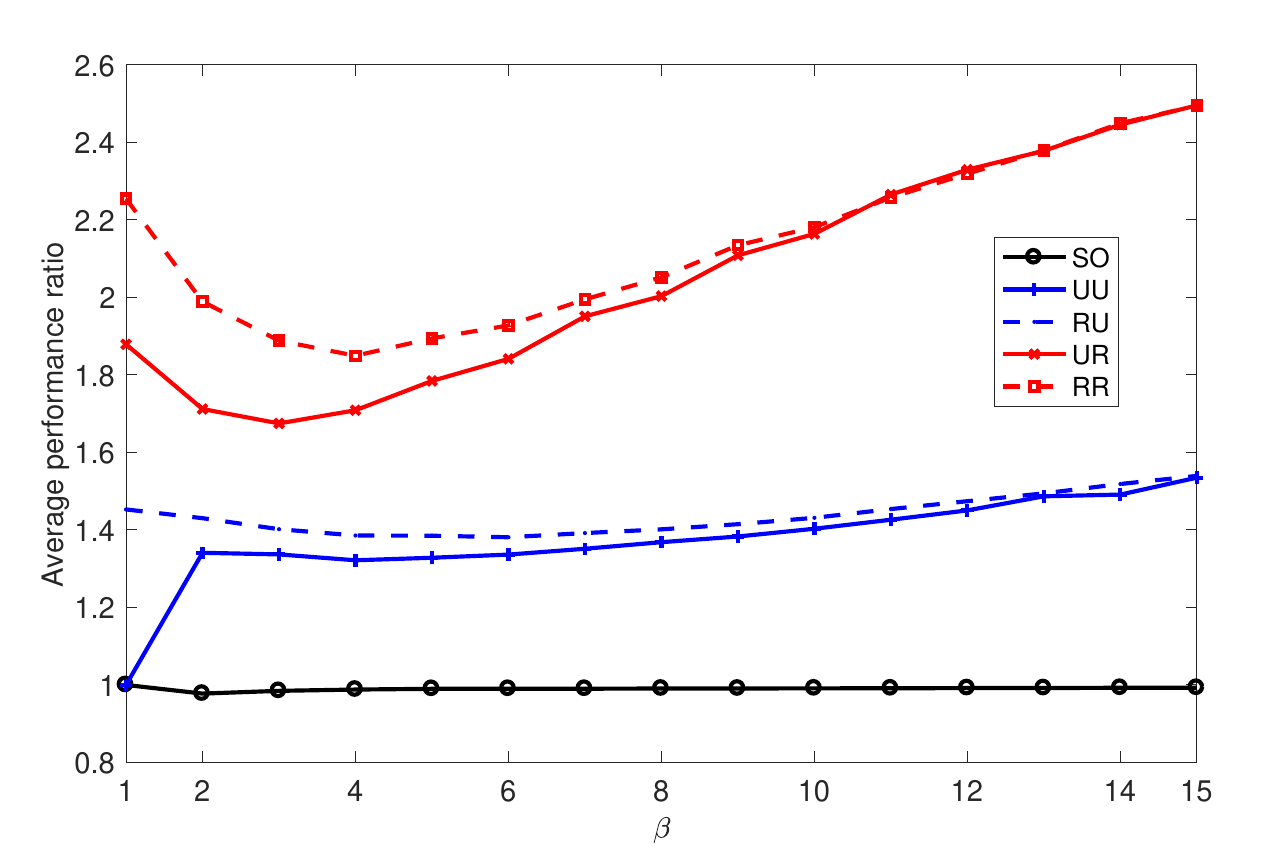}
	\caption{Average performance of \emph{AANC} versus SO,UU,RU,UR, and RR as a function of $\beta$ under the uniform distribution.} 
	\label{figure-uniform-nvsm-average-nonconcave}
\end{figure}

\subsubsection{Real-world utility functions}
\begin{figure}
	\centering 
	\includegraphics[height=4cm,width=5.2cm]{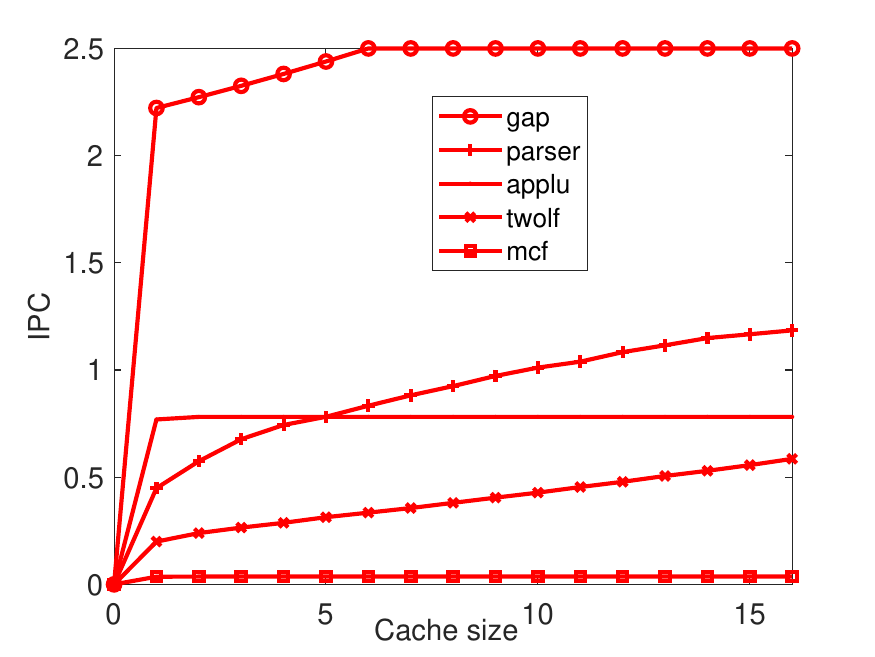}
	\caption{Concave utility functions for 5 SPEC benchmarks using data adapted from \cite{Qureshi, Lai}. The x-axis shows the normalized amount of cache an application is allocated, and the y-axis shows the IPC.} 
	\label{figure-real-concave-utility}
\end{figure}

\begin{figure}
	\centering 
	\subfigure{ 
		\label{figure-utility-set1} 
		\includegraphics[height=4cm,width=5.2cm]{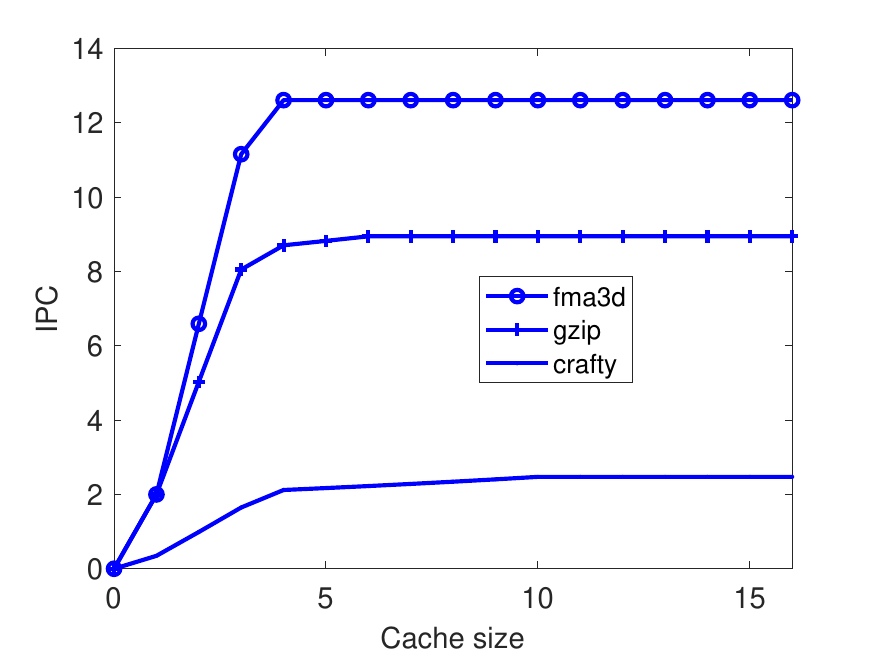}} 
	\subfigure{ 
		\label{figure-utility-set2} 
		\includegraphics[height=4cm,width=5.2cm]{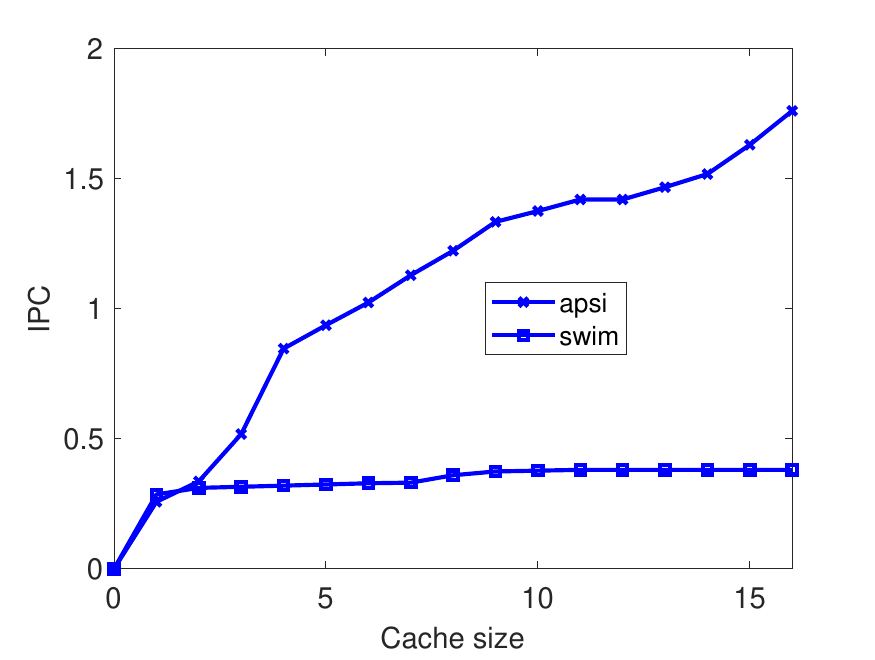}}    
	\caption{Nonconcave utility functions for 5 SPEC benchmarks using data adapted from \cite{Qureshi, Lai}.} 
	\label{figure-real-nonconcave-utility}
\end{figure}

We also use real-world utility functions derived from 10 SPEC CPU benchmarks: \emph{gap}, \emph{parser}, \emph{applu}, \emph{twolf}, \emph{mcf}, \emph{fma3d}, \emph{gzip}, \emph{crafty}, \emph{apsi}, \emph{swim}. Each benchmark's execution speed (utility) is measured using its IPC (instructions per cycle) as a function of the amount of L2 cache the program is allocated\footnote{In multicore processors with multiple levels of cache, the first level L1 cache is typically privately owned by each core, while the L2 cache is shared by different cores.}. Figure \ref{figure-real-concave-utility} shows the utility functions of the first 5 benchmarks with concave utility functions, and Figure \ref{figure-real-nonconcave-utility} shows the last 5 benchmarks with nonconcave utility functions \footnote{Our data is obtained from \cite{Qureshi, Lai}, which do not show the benchmarks' utility with zero resource. Since having sufficient cache is crucial to performance, we make the simplifying assumption that the utility of a benchmark allocated zero L2 cache is zero \cite{PanMakespan}. While this assumption does not hold in practice, we use it to facilitate a more uniform comparison with synthetic utility functions.}. In the experiment on real-world utility functions, we set the number of servers to be $m=2$, and the number of threads to be $n=8$, representing some 8 among the 10 benchmarks described earlier. We then test the effects of varying the server resource size $C$ representing the size of the L2 cache. 

\begin{table}\small\label{table-realistic-function-nonconcave2}
\caption{Performance of \emph{AANC} vs. SO, UU, RU, UR and RR for different resource sizes using the first set of 8 real-world utility functions (i.e., \emph{gap}, \emph{parser}, \emph{applu}, \emph{twolf}, \emph{mcf}, \emph{fma3d}, \emph{gzip}, \emph{crafty})}
\centering
\begin{tabular}{|c|c|c|c|c|c|}
\hline
 Server resource size     &SO&  UU & RU & UR        & RR  \\
\hline
 4     &1 &  2.85  &1.87&12.13&2.04      \\
\hline
8      &0.98&1.61&1.55&2.14&1.62\\
\hline
16    &1&1.02&1.36&1.19&1.40\\
\hline
\end{tabular}
\end{table}%

We look at the performance using real-world utility functions from the first set of 8 CPU benchmarks (i.e., \emph{gap}, \emph{parser}, \emph{applu}, \emph{twolf}, \emph{mcf}, \emph{fma3d}, \emph{gzip}, \emph{crafty}). The first 5 benchmarks have concave utility functions and the last 3 ones have nonconcave utility functions. We set the number of threads to  $n=8$ to represent the 8 benchmarks. Table IV shows the ratio of \emph{AANC}'s total utility compared to the utilities of SO,UU,RU,UR, and RR for different resource sizes $C$. Compared to SO, our performance never drops below 0.98, i.e., \emph{AANC} always achieves at least 98\% of the optimal utility. The ratios of \emph{AANC}'s utility compared to those of UU, UR, RU and RR are again always above 1, so that it always perform better than the heuristics.  In addition, the performance ratio decreases as $C$ increases. For example, when $C=4$,  our algorithm is $2.85\times$, $1.87\times$, $12.13\times$ and $2.04\times$ better than UU,RU,UR,RR, respectively. For $C=16$, our algorithm is $1.02\times$, $1.36\times$, $1.19\times$ and $1.40\times$ better than UU,RU,UR,RR, respectively. This shows our algorithm is most effective in highly resource constrained environments or when running a large number of threads.

\begin{table}\small\label{table-realistic-function-nonconcave2}
\caption{Performance of \emph{AANC} vs. SO, UU, RU, UR and RR for different resource sizes using the second set of 8 real-world utility functions (i.e., \emph{gap}, \emph{applu}, \emph{twolf}, \emph{mcf}, \emph{fma3d}, \emph{gzip}, \emph{apsi}, \emph{swim})}
\centering
\begin{tabular}{|c|c|c|c|c|c|}
\hline
 Server resource size     &SO&  UU & RU & UR        & RR  \\
\hline
 4     &1 &  2.94  &1.91&14.35&2.14      \\
\hline
8      &0.987&1.62&1.53&2.21&1.62\\
\hline
16    &0.98&1.02&1.32&1.17&1.36\\
\hline
\end{tabular}
\end{table}%

Lastly, we look at \emph{AANC}'s performance using real-world utility functions from the second set of 8 CPU benchmarks (\emph{gap}, \emph{applu}, \emph{twolf}, \emph{mcf}, \emph{fma3d}, \emph{gzip}, \emph{apsi}, \emph{swim}). The first 4 benchmarks have concave utility functions and the last 4 ones have nonconcave utility functions. We set the number of threads to be $n=8$, representing the following 8 benchmarks. Table V shows the ratio of \emph{AANC}'s total utility compared to the utilities of SO,UU,RU,UR, and RR for different resource sizes $C$. The behaviors are similar with that for the first set of 8 benchmarks.

\blue{
	\section{Extension to Multiple Resource Types}\label{sec-multi-type-dp}
	In this section, we further generalize our algorithms to a setting where each server has multiple types of resources, and each thread's utility function depends on all the resource types.  An example of this situation is in cloud computing, where a virtual machine's (\emph{i.e.} thread's) performance is affected by both the number of CPU cores and amount of memory it is allocated.   We adapt our algorithms for the single resource setting to the multi-resource one, and demonstrate that the new algorithm achieves good empirical performance.  
	
	The model we use is similar to the one presented in Section III, and we only point out the differences.  We assume there are $d$ types of resources, and refer to the $i$'th type of resource as a \emph{type-$i$ resource}. Each server has $C_i$ amount of type-$i$ resource, where $C_i$ is a positive integer. Each thread $t_i$ has a $d$-dimensional utility function $f_i(x_1,\ldots, x_d)$, where $x_j\in [0,C_j]$ is the amount of type-$j$ resource $t_i$ is allocated. We assume that each thread's utility function is nondecreasing and concave. An \emph{assignment} is given by a vector $[(r_1, c_{1,1},\ldots, c_{1,d}), \ldots, (r_{n,1}, c_{1,1}, \ldots, c_{n,d})]$, indicating that each thread $t_i$ is allocated $c_{i,j}$ amount of type-$j$ resource for $j=1,\ldots, d$ on server $s_{r_i}$.
	
	We now present the algorithm for the AA problem with multiple resource types, which we call \emph{AAMR}. Similar to Definition \ref{def-super-opt}, we first define the \emph{super-optimal utility} and \emph{super-optimal allocation} of the problem under multiple resource types.  
	\begin{definition}\label{def-super-opt-multi-types}
		Given an instance $A$ of the AA problem with $m$ servers each with $C_i$ amount of type-$i$ resource for $i=1,\ldots, d$, and $n$ threads with utility functions $f_1, \ldots, f_n$, consider the quantity
		\begin{equation*}
		\hat{F} = \max_{c_{i,j}, i\in [1,n], j\in [1,d]}\sum_{i=1}^{n}f(c_{i,1},\ldots,c_{i,d})
		\end{equation*}
		subject to $\sum_{i=1}^n c_{i,j} \leq mC_j$ for $j=1,\ldots, d$. Let $\hat{c}_{i,j}, i\in [1,n], j\in [1,d]$ be values for  $c_{i,j}, i\in [1,n], j\in [1,d]$, respectively, which achieve the optimum $\hat{F}$.  Then we call $\hat{F} = \sum_{i=1}^n f(\hat{c}_{i,1},\ldots,\hat{c}_{i,d})$ the \emph{super-optimal utility} of $A$, and $\hat{c}_{i,j}, i\in [1,n], j\in [1,d]$ the \emph{super-optimal allocation} for $A$. 
	\end{definition}

AAMR first finds the super-optimal allocation of the problem with multiple resource types, and then uses the allocation as an indicator to assign threads and allocate resource to the assigned threads.  Finding the super-optimal allocation is equivalent to finding an optimal allocation which maximizes the total utility of $n$ threads on a single server, given the threads' utility functions $f_1, \ldots, f_n$, and a resource capacity $\hat{C}_i=mC_i$ for each type-$i$ resource on the server.  We use a dynamic programming algorithm which we call \textsc{DPSOpt} to solve this problem.  Given $x_1\in [0,\hat{C}_1],\ldots, x_n\in [0,\hat{C}_n]$ and  $k \in [1,n]$, let $F_k(x_1,\ldots, x_d)$ be the maximum utility of the first $k$ threads when they are allocated $x_1$ amount of type-$1$ resource, $\ldots$, $x_d$ amount of type-$d$ resource.  Then we have
\begin{eqnarray*}\small
F_k(x_1,\ldots, x_d)\!\!\!&=&\max_{z_i\in [0,\min(x_i,C_i)], i=1,\ldots, d}\{f_k(z_1,\ldots, z_d)\\
                    & &+F_{k-1}(x_1-z_1,\ldots, x_d-z_d)\}.
\end{eqnarray*}
Note that $z_i \leq C_i$ for $i=1,\ldots, d$, since the domain of each thread's utility function is $[0,C_i]$ for a type-$i$ resource. We can find the maximum total utility $F_n(\hat{C}_1,\ldots, \hat{C}_d)$ by finding all $F_k(x_1,\ldots, x_d)$ in increasing lexicographical order of $k, x_1,\ldots, x_d$, and then use backtracking to find the optimal allocation for threads $t_n, \ldots, t_1$.  The time to solve the DP (dynamic programming) is $O(n C^{2d})$ for $n$ threads, where $C = \max_i \hat{C}_i$.  Note however that $d$ is typically small in practice, \emph{e.g.} $d = 3$ when considering processing, memory and network bandwidth as resources.  Thus, the DP can typically be solved in an acceptable amount of time for moderate values of $C$.  

We now give the pseudocode for AAMR. The input includes the original utility functions $f_1, \ldots, f_n$, and a super-optimal allocation $\hat{c}_{i,j}$ for $i=1,\ldots, n$ and $j=1, \ldots, d$ returned by \textsc{DPSOpt}. Note that we do not use linearized utility functions as we do for the single resource setting due to the difficulty of linearizing multi-dimensional utility functions. Variable $C_{i,j}$ represents the amount of type-$j$ resource remaining on server $i$, and $R$ is the set of unassigned threads. The outer loop of the algorithm runs until all threads in $R$ have been assigned. During each iteration, $U$ is the set of \emph{(thread, server)} pairs such that the server has at least as much remaining resource as the thread's super-optimal allocation, for every resource type. If any such pair exists, then in line 7 of AAMR we find a thread in $U$ with the greatest utility when given its super-optimal allocation. Otherwise, in line 10 we find a
thread which can obtain the greatest utility when running on any server and using the minimum value between the thread's super-optimal allocation and the remaining resources on the server, for each resource type.  In both cases we assign the thread in line 13 to a server giving it the greatest utility. Lastly, we update the server's remaining resources accordingly.

\begin{algorithm}\small	
	\caption{Pseudocode for AAMR}
	\begin{algorithmic}[1]
		\State \textbf{Input}: Utility functions $f_1, \ldots, f_n$, and super-optimal allocation $\hat{c}_{i,j}$ for $i=1,\ldots, n$ and $j=1, \ldots, d$ returned by \textsc{DPSOpt}
		\State $C_{i,j} \gets C_j$ for $i=1, \ldots, m$ and $j=1, \ldots, d$ 
		\State $R \gets \{1, \ldots, n\}$
		\While{$R \neq \emptyset$}
		\State $U \gets \{ (i,j) \, | \, (i \in R) \wedge (1 \leq j \leq m) \wedge (\hat{c}_{i,1} \leq C_{j,1})\wedge\ldots \wedge (\hat{c}_{i,d} \leq C_{j,d}) \}$
		\If{$U \neq \emptyset$}
		\State \!\!\!\!\!\!\!\!\!\!\!$(i,j) \gets \argmax_{(i,j) \in U} \,  f_i(\hat{c}_{i,1},\ldots, \hat{c}_{i,d})$
		\State \!\!\!\!\!\!\!\!\!\!\!$c_{i,k} \gets \hat{c}_{i,k}$ for $k=1, \ldots, d$ 
		\Else
		\State \!\!\!\!\!\!\!\!\!\!\!$(i,j) \gets \argmax_{i \in R, 1 \leq j \leq m}\,$ $f_i(\min(C_{j,1},\hat{c}_{i,1}),\ldots,\min(C_{j,d},\hat{c}_{i,d}))$
		\State \!\!\!\!\!\!\!\!\!\!\!$c_{i,k} \gets \min(C_{j,k},\hat{c}_{i,k})$ for $k=1, \ldots, d$ 
		\EndIf
		\State $r_i \gets j$
		\State $R \gets R - \{i\}$
		\State $C_{j,k} \gets C_{j,k} - c_{i,k}$ for $k=1, \ldots, d$ 
		\EndWhile
		\State \textbf{return} $(r_1,c_{1,1},\ldots, c_{1,d}),\ldots,(r_n, c_{n,1},\ldots, c_{n,d})$
\end{algorithmic}
\label{Alg3}
\end{algorithm}


\subsection{Experimental Evaluation of AAMR}
As multi-dimensional utility functions pose more challenges than one-dimensional ones, the analysis of AAMR is performed numerically. Similar to Section VIII, we compare AAMR to the super-optimal (SO) utility, which is an upper bound on the optimal utility, and which can be computed using \textsc{DPSOpt}. We also compare AAMR with several simple but practical heuristics, including UU (uniform assignment and uniform allocation), UR (uniform assignment and random allocation), RU (random assignment and uniform allocation), RR (random assignment and random allocation).  We consider two resource types, and assume that each thread $t_i$ has a random two-dimensional concave utility function $f_i(x_1,x_2)=\gamma_1x^{\alpha_1}_1+\gamma_2x^{\alpha_2}_2$, where $x_1\in [0,C_1]$, $x_2\in [0,C_2]$ are variables, and $\gamma_1,\gamma_2>0$, $\alpha_1, \alpha_2\in (0,1)$ are parameters. We generate $\gamma_1, \gamma_2, \alpha_1, \alpha_2$ randomly according to the uniform distribution. In the experiment, we set the number of servers to be $m=4$ and the resource size to be $C_1=40, C_2=20$. We test the effects of varying parameter $\beta = \frac{n}{m}$, which represents the average number of threads per server. The following results show the average performance from 100 random trials. 

Figure \ref{figure-nvsm-average-multi-type} shows the average ratio of AAMR's total utility to the utilities of the other algorithms, for $\beta$ varying between 1 to 7. The behaviors shown in the figure are similar to those for one resource type in Section VIII. In addition, compared to SO, AAMR's utility ratio never drops below 0.96, indicating that AAMR always achieves at least 96\% of the optimal utility. Moreover, AAMR is always no worse than UU, RU, UR, RR, and it is $1.98\times$, $1.97\times$, $2.67\times$, $2.62\times$ better than UU, RU,UR, RR when $\beta=7$. 

\begin{figure}
	\centering 
	\includegraphics[height=4cm,width=5.2cm]{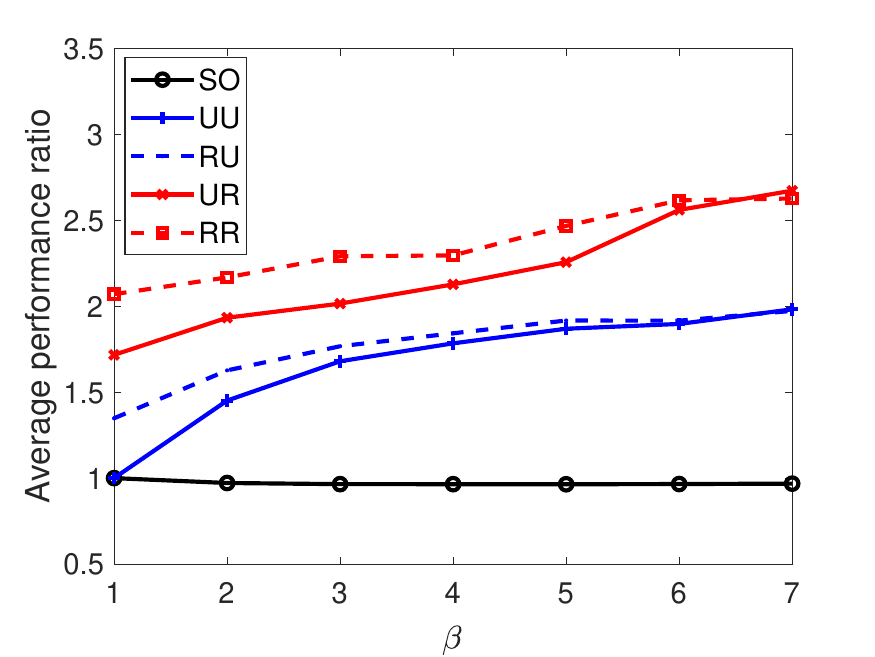}
	\caption{Average performance of Algorithm \emph{AAMR} versus SO,UU,RU,UR,RR as a function of $\beta$ under two resource types.} 
	\label{figure-nvsm-average-multi-type}
\end{figure}
}

\section{Conclusion}
In this paper, we studied the novel problem of simultaneously assigning threads to servers and allocating server resources to maximize total utility. We showed that the problem is NP-hard, even when there are only two servers and all utility functions are concave.  For concave utility functions, we presented two algorithms with approximation ratio $2 (\sqrt{2}-1) > 0.828$, running in times $O(mn^2 + n (\log mC)^2)$ and $O(n (\log mC)^2)$, respectively. In addition, we presented an algorithm with approximation ratio $\frac{1}{2}$ for threads with nonconcave utility functions, and an algorithm for concave utility functions with multiple resource types.  Lastly, we tested our algorithms on multiple types of threads, and found that our algorithms always achieve at least 92\% of the optimal utility, and typically over 98\% of the optimal utility. Our utility is up to 9 times better than those of several heuristic methods. 

In this work we considered homogeneous servers each with the same amount of resources.  We are interested in extending this model to accommodate heterogeneous servers with different capacities. \blue{In addition, our algorithms are currently centralized, and all decisions are made by a single scheduler process.  To scale the AA problem to larger system settings, we would like to consider distributed versions of our algorithms, where assignments and allocations are made concurrently by multiple schedulers.}  

\section*{Author Biography}
\vspace{-0.5in}
\begin{IEEEbiography}[{\includegraphics[width=0.8in,height=0.8in,clip,keepaspectratio]{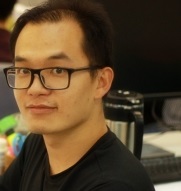}}]{Pan~Lai} received the PhD degree from School of Computer Engineering in Nanyang Technological University in 2016. He was a Postdoctoral Research Fellow in Singapore University of Technology and Design, Singapore during 2016-2019. His research interests include resource allocation and scheduling algorithm design in computer and network systems, network economics and game theory. 
\end{IEEEbiography}

\vspace{-0.6in}
\begin{IEEEbiography}[{\includegraphics[width=0.8in,height=0.8in,clip,keepaspectratio]{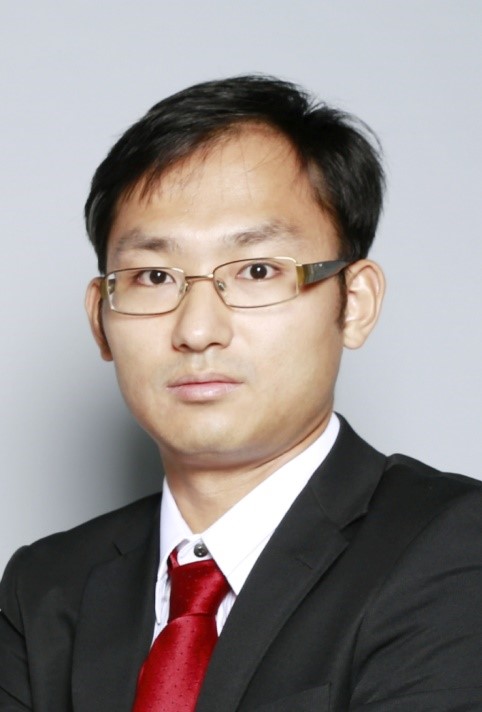}}] {Rui~Fan} is an associate professor in computer science in the School of Information Science and Technology at ShanghaiTech University.  He received his BSc from the California Institute of Technology in 2000, and his PhD in computer science from the Massachusetts Institute of Technology in 2008.  Prior to joining ShanghaiTech he was an assistant professor at Nanyang Technological University.  Dr. Fan's main research interests are in parallel and distributed computing, including efficient algorithms and optimizations for parallel architectures. His current work focuses on accelerating deep learning through improved network architectures, numerical optimization techniques and efficient implementations.
\end{IEEEbiography}


\vspace{-0.5in}
\begin{IEEEbiography}[{\includegraphics[width=0.8in,height=0.8in,clip,keepaspectratio]{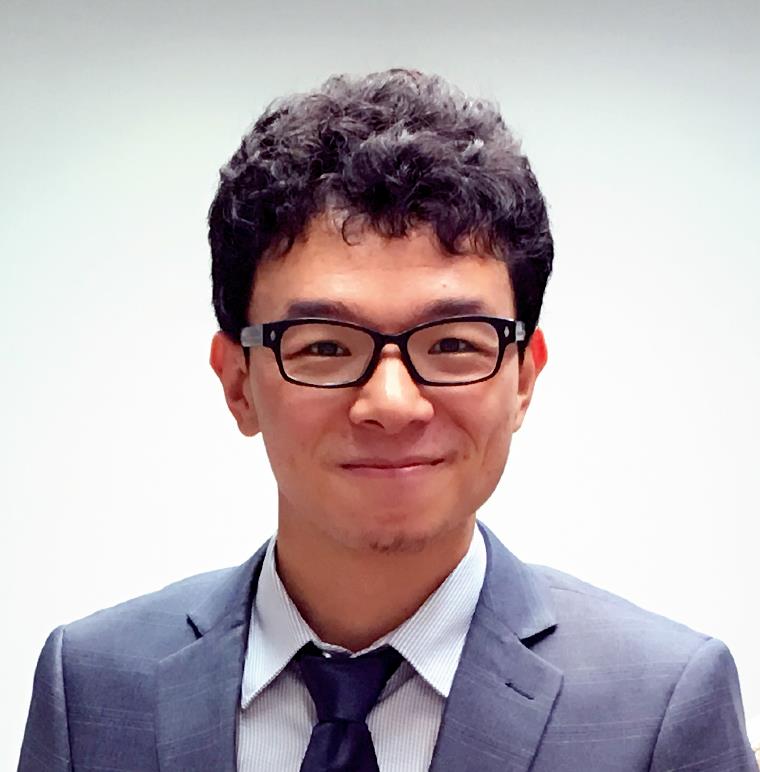}}]{Xiao Zhang} received PhD degree from City University of Hong Kong, Hong Kong, 2016. He is currently an associate professor in South-Central University for Nationalities, China. His research interests include algorithms design and analysis, combinatorial optimization, wireless, and UAV networking.
\end{IEEEbiography}

\vspace{-0.7in}
\begin{IEEEbiography}[{\includegraphics[width=0.6in,height=0.6in,clip,keepaspectratio]{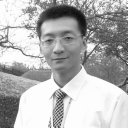}}]{Wei~Zhang} (S'10-M'16) received Ph.D. degree in computer science from Nanyang Technological University, Singapore, in 2015. He is currently an Assistant Professor with Singapore Institute of Technology. His current research interests include energy optimization as well as smart city.
\end{IEEEbiography}

\vspace{-0.7in}
\begin{IEEEbiography}[{\includegraphics[width=0.8in,height=0.8in,clip,keepaspectratio]{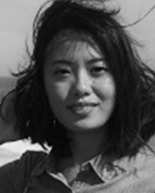}}]{Fang~Liu} (S'10-M'16) received the Ph.D. degree in computer science from Nanyang Technological University, Singapore, in 2015. Currently, she is a lecturer in Singapore University of Social Sciences, Singapore. Her research interests include secure data analytics, computational intelligence and so on.  
\end{IEEEbiography}

\end{document}